\newtheorem{theorem}{Theorem}
\newtheorem{lemma}{Lemma}
\newtheorem{definition}{Definition}
\newtheorem{proposition}{Proposition}
\newtheorem{fact}{Fact}
\newcommand{\ketbra}[2]{\vert #1 \rangle \langle #2 \vert}
\newcommand{\bk}[1]{\mbox{$\left\langle #1 \right\rangle$}}
\newcommand{\comments}[1]{}
\newcommand{\id}{\mathbb{I}}
\newcommand{\Tr}[1]{\mbox{$\text{Tr}\left( #1 \right)$}}
\begin{document}
\preprint{APS/123-QED}
\title{Quantum subspace verification for error correction codes}

\date{\today}
\author{Junjie Chen}
\affiliation{Center for Quantum Information, Institute for Interdisciplinary Information Sciences, Tsinghua University, Beijing 100084, China}
\author{Pei Zeng}
\affiliation{Pritzker School of Molecular Engineering, The University of Chicago, Chicago 60637, USA}
\author{Qi Zhao}
\affiliation{Department of Computer Science, University of Hong Kong, Pokfulam Road, Hong Kong}
\author{Xiongfeng Ma}
\affiliation{Center for Quantum Information, Institute for Interdisciplinary Information Sciences, Tsinghua University, Beijing 100084, China}
\author{You Zhou}
\email{you\_zhou@fudan.edu.cn}
\affiliation{Key Laboratory for Information Science of Electromagnetic Waves (Ministry of Education), Fudan University, Shanghai 200433, China}

\begin{abstract}
Benchmarking the performance of quantum error correction codes in physical systems is crucial for achieving fault-tolerant quantum computing. Current methodologies, such as (shadow) tomography or direct fidelity estimation, fall short in efficiency due to the neglect of possible prior knowledge about quantum states. To address the challenge, we introduce a framework of quantum subspace verification, employing the knowledge of quantum error correction code subspaces to reduce the potential measurement budgets. Specifically, we give the sample complexity to estimate the fidelity to the target subspace under some confidence level. Building on the framework, verification operators are developed, which can be implemented with experiment-friendly local measurements for stabilizer codes and quantum low-density parity-check (QLDPC) codes. Our constructions require $O(n-k)$ local measurement settings for both, and the sample complexity of $O(n-k)$ for stabilizer codes and of $O((n-k)^2)$ for generic QLDPC codes, where $n$ and $k$ are the numbers of physical and logical qubits, respectively. Notably, for certain codes like the notable Calderbank-Shor-Steane codes and QLDPC stabilizer codes, the setting number and sample complexity can be significantly reduced and are even independent of $n$. In addition, by combining the proposed subspace verification and direct fidelity estimation, we construct a protocol to verify the fidelity of general magic logical states with exponentially smaller sample complexity than previous methods. Our finding facilitates efficient and feasible verification of quantum error correction codes and also magical states, advancing the realization in practical quantum platforms.
\end{abstract}

\maketitle

\section{Introduction}

Quantum error correction (QEC) is pivotal in advancing towards large-scale quantum computation. Numerous QEC codes are constructed, such as surface codes \cite{Kitaev2003Toric, Bravyi1998Quantum, Google2024surfacecodethreshold}, good quantum low-density parity-check (QLDPC) codes \cite{Gottesman2014Faulttolerant, Panteleev2022GoodLDPC, xu2024constant}, and random stabilizer codes \cite{Gullans2021RandomQEC}, each offering unique advantages and compromises. Benchmarking the performance of various QEC codes across different physical systems is crucial for identifying codes compatible with fault-tolerant quantum computing on these systems \cite{Finsterhoelzl2023benchmarkingQEC}. The performance in general can be characterized by the fidelity between the physical state during quantum computing and the QEC code subspace. However, conventional methods such as (shadow) tomography \cite{cramer2010efficient, gross2010quantum, Lvovsky2009CVtomoRMP, Huang2020Predicting, Senrui2021Robust,zhou2024hybrid} and fidelity estimation \cite{Flammia2011Direct,Guhne2007toolbox,Seshadri2024versatile} are resource-intensive. They suffer from an exponential growth in resource demands due to failing to utilize prior knowledge of the quantum states.

An efficient framework, quantum state verification, was introduced to leverage prior knowledge and reduce the resources required \cite{Sam2018Optimal,hayashi2006study,Zhu2019Efficient, Zhu2019General, yu2019optimal,Liu2019Dicke,Liu2021Nondemolition}. This framework involves measuring the prepared noisy states over multiple rounds, and conditioning on passing all tests in each round to lower bound the fidelity to the ideal pure state within some specified confidence level. It has been demonstrated that state verification can be efficiently executed for all stabilizer states, with only local measurements \cite{Dangniam2020Stabilizer, Liu2019Dicke, Huang2024Certifying}. However, it would complicate the situation if one naively applies the state verification to the QEC problems here. 
Such a strategy requires the design of different protocols tailored to the target physical states in the QEC code subspace. Moreover, there is a lack of efficient verification protocols for general magic states, out of reach of stabilizer formalism \cite{Zhu2019Hypergraph}, especially for logical magic states.

To advance the QEC verification, here we extend the state verification to that of quantum subspace, utilizing the prior knowledge of the code subspace, making it suitable for this challenge. Modeled as a hypothesis testing problem, we derive the number of measurements sufficient to discriminate whether the fidelity to the subspace exceeds some specified threshold. Moreover and importantly, we can also estimate the range of such fidelity directly from the collected measurement results beyond the state-verification scenario \cite{Zhu2019Efficient, Zhu2019General}. Indeed, the design of the measurement strategy - essentially, the corresponding verification operator  - is crucial as both the sample complexity and the range of estimated fidelity are closely related to its spectral property.

Based on the established framework, we explore the constructions of verification operators for two prevalent classes of QEC codes: stabilizer codes and QLDPC codes, using local measurements, which facilitate practical realizations in physical systems \cite{Google2024surfacecodethreshold, xu2024constant}. In this context, we employ a generalized concept of QLDPC codes, which are defined based on projectors and do not necessarily conform to traditional stabilizer codes. For stabilizer codes, we adopt a graphical method to assemble the stabilizers on account of their commuting relations into subsets, which can even reduce the number of local measurement settings and sample complexity to constant for like Calderbank-Shor-Steane (CSS) codes \cite{Calderbank1996CSScode, Steane1996CSScode} and QLDPC stabilizer codes \cite{Gottesman2014Faulttolerant, Panteleev2022GoodLDPC}. For generic $[[n,k,d]]$ QLDPC codes, if only single-qubit measurement is permitted, the maximal eigenvalues of the verification operator would be less than $1$ which is beyond the situations of the previous state-verification problems \cite{Zhu2019Efficient, Zhu2019General}. A proposition is given to deal with this situation, which indicates that the sample complexity can be bounded by $O(n-k)^2$. In addition to the verification of QEC codes, our framework offers potential in the realm of quantum state learning \cite{Anshu2024Survey}. With the adoption of advanced subspace verification to reduce the potential space of the quantum states, we can further apply direct fidelity estimation on the logical subspace, which can reduce the sample complexity exponentially and facilitate the magic-state verification. 

\section{Framework of subspace verification}\label{sec:frame}

In this section, we present a framework for subspace verification. It is important to note that this framework applies to any quantum subspace verification and is not solely restricted to QEC code subspaces.

Given a quantum device $D$ that should produce quantum states in a specific subspace $\mathcal{V}$, our objective is to verify whether it works correctly. Suppose the device always produces quantum states independent and identically (i.i.d. assumption). Denote the projector to the target subspace $\mathcal{V}$ as $\mathcal{P}$, and denote the expected quantum states produced by $D$ as $\rho$. We can use the fidelity between $\rho$ and $\mathcal{P}$ defined as 
\begin{equation}
\begin{aligned}
F(\mathcal{P},\rho)&:=\Tr{\mathcal{P}\rho}
\end{aligned}
\end{equation}
to characterize the overlap between a quantum state and the target subspace. Then, the verification problem can be described by the following hypothesis testing problem \cite{Huang2024Certifying}.
\begin{enumerate}
\item[1.] $D$ is ``good": The quantum states $\rho$ generated by $D$ satisfies $F(\mathcal{P},\rho)\geq 1-\tau\epsilon$.
\item[2.] $D$ is ``bad": The quantum states $\rho$ generated by $D$ satisfies $F(\mathcal{P},\rho)\leq 1-\epsilon$.
\end{enumerate}

Notice that a gap characterized by $\tau<1$ is introduced to tolerate imperfect verification strategy designing and measurement result fluctuation. Ideally, one would like to minimize this gap, approaching $\tau\rightarrow1$. However, while measurement result fluctuations can be reduced to any small positive constant with sufficient samples, the impact of an imperfect verification strategy design cannot be disregarded. We will later provide an upper bound for $\tau$, taking these considerations into account.

To achieve this task, one can adopt a series of 2-outcome positive-operator-valued measurements (POVMs) $\{E_l,\id-E_l\}_l$. If the measurement outcome corresponds to $E_l$, it is considered to have passed the test; otherwise, it is deemed to have failed. These POVMs should be designed such that any state $\rho$ within the subspace $\mathcal{V}$ will always pass the test, i.e., $\Tr{E_l\rho}=1, \forall \rho\in \mathcal{V}$. In each verification round, we choose a POVM $\{E_l,\id-E_l\}$ from a probability distribution $p_l$ with $\sum_l p_l=1$. Define
\begin{equation}
\begin{aligned}
    \Omega=\sum_l p_l E_l
\end{aligned}
\end{equation}
as the verification operator. By construction, one thus has $\Tr{\Omega\sigma}=\sum_l p_l\Tr{E_l\sigma}=1$ for any $\sigma\in \mathcal{V}$. The following fact gives the spectral decomposition of $\Omega$.
\begin{fact}\label{Prop:omega}
The verification operator $\Omega$ for the subspace $\mathcal{V}$ shows the following spectral decomposition
\begin{equation}\label{eq:decom}
\begin{aligned}
  \Omega=\mathcal{P}+\sum_{j=d_{\mathcal{V}}+1}^d \lambda_j\ketbra{\psi_j}{\psi_j}
\end{aligned}
\end{equation}
where $\mathcal{P}$ is the projector of $\mathcal{V}$, and $\{\ket{\psi_j}\}_{k=d_{\mathcal{V}}+1}^{d}$ is a set of basis for the orthogonal complement $\mathcal{V}^{\bot}$. $d$ and $d_{\mathcal{V}}$ are the dimensions of the Hilbert space and the subspace $\mathcal{V}$, respectively. The eigenvalues are arranged in descending order.
\end{fact}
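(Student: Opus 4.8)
The plan is to use only the two defining properties of each POVM element $E_l$: it is Hermitian with $0\le E_l\le \id$ (so $\id-E_l$ is positive semidefinite), and it passes every state of the subspace, $\Tr{E_l\rho}=1$ for all $\rho\in\mathcal{V}$. The first thing I would establish is that these two properties force $E_l$ to act as the identity on all of $\mathcal{V}$. Applying the pass condition to a pure state $\ketbra{\phi}{\phi}$ with $\ket{\phi}\in\mathcal{V}$ gives $\bra{\phi}E_l\ket{\phi}=1=\braket{\phi|\phi}$, hence $\bra{\phi}(\id-E_l)\ket{\phi}=0$; since $\id-E_l$ is positive semidefinite this implies $(\id-E_l)^{1/2}\ket{\phi}=0$, and therefore $E_l\ket{\phi}=\ket{\phi}$. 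As $\ket{\phi}$ ranges over $\mathcal{V}$, this shows $\mathcal{V}$ lies inside the eigenvalue-$1$ eigenspace of $E_l$.

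Next I would average over $l$. The operator $\Omega=\sum_l p_l E_l$ is a convex combination of the $E_l$, so it is Hermitian and satisfies $0\le\Omega\le\id$; moreover $\Omega\ket{\phi}=\sum_l p_l E_l\ket{\phi}=\sum_l p_l\ket{\phi}=\ket{\phi}$ for every $\ket{\phi}\in\mathcal{V}$, so $\mathcal{V}$ sits in the eigenvalue-$1$ eigenspace of $\Omega$, and in particular $\Omega\mathcal{P}=\mathcal{P}\Omega=\mathcal{P}$. Because $\Omega$ is Hermitian and leaves $\mathcal{V}$ invariant, it also leaves the orthogonal complement $\mathcal{V}^{\perp}$ invariant.

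Then I would invoke the spectral theorem for the restriction of $\Omega$ to $\mathcal{V}^{\perp}$: being a Hermitian operator on the finite-dimensional space $\mathcal{V}^{\perp}$, it admits an orthonormal eigenbasis $\{\ket{\psi_j}\}_{j=d_{\mathcal{V}}+1}^{d}$ with real eigenvalues $\lambda_j$, and $0\le\Omega\le\id$ forces $0\le\lambda_j\le 1$. Writing $\mathcal{P}=\sum_{j=1}^{d_{\mathcal{V}}}\ketbra{\psi_j}{\psi_j}$ in any orthonormal basis of $\mathcal{V}$ and gluing the two blocks together yields $\Omega=\mathcal{P}+\sum_{j=d_{\mathcal{V}}+1}^{d}\lambda_j\ketbra{\psi_j}{\psi_j}$; relabeling the $\lambda_j$ to be non-increasing (all of them being $\le 1$) puts the decomposition into the claimed descending-order form.

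\textbf{Main obstacle.} There is no deep obstacle here — the statement is a routine consequence of operator positivity together with the spectral theorem. The only step that needs an argument rather than bookkeeping is the first one, deducing $E_l\ket{\phi}=\ket{\phi}$ on $\mathcal{V}$ from $\bra{\phi}E_l\ket{\phi}=1$ and the positive semidefiniteness of $\id-E_l$; once that is in hand, the rest is identifying invariant subspaces and diagonalizing.
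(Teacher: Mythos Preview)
Your argument is correct. The paper itself does not supply a proof of this statement at all: it is labeled a ``Fact'' and the decomposition in Eq.~\eqref{eq:decom} is asserted without justification, with only the surrounding text noting that $\Tr{\Omega\sigma}=1$ for $\sigma\in\mathcal{V}$. Your write-up therefore fills in precisely the details the authors leave implicit --- the key one being that $\bra{\phi}E_l\ket{\phi}=1$ together with $0\le E_l\le\id$ forces $E_l\ket{\phi}=\ket{\phi}$, so that $\mathcal{V}$ is an invariant (in fact eigenvalue-$1$) subspace for each $E_l$ and hence for $\Omega$. After that, block-diagonalizing with respect to $\mathcal{V}\oplus\mathcal{V}^{\perp}$ and applying the spectral theorem on $\mathcal{V}^{\perp}$ is exactly the right way to finish.
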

Note that the largest eigenvalue of $\Omega$ is $1$. We denote the spectral gap between the largest eigenvalue and the second largest one as $\Delta_{\min}(\Omega)=1-\lambda_{d_{\mathcal{V}}+1}$, and the spectral gap between the largest eigenvalue and the smallest one as $\Delta_{\max}(\Omega)=1-\lambda_{d}$, and it is clear that $\Delta_{\max}(\Omega)\geq \Delta_{\min}(\Omega)$. 

Define $r:=\frac{\Delta_{\min}(\Omega)}{\tau\Delta_{\max}(\Omega)}$ and $\tilde{\epsilon}:=\Delta_{\min}(\Omega)\epsilon$ for short. Denote 
\begin{equation}\label{eq:defp0}
p_0=\frac{\ln{r}}{\ln{r}+\ln{\frac{1-\tilde{\epsilon}/r}{1-\tilde{\epsilon}}}}
\end{equation}
as the solution to the equation $D\left[p_0\|1-\tilde{\epsilon}\right]=D\left[p_0\|1-\tilde{\epsilon}/r\right]$, where $D[p\|q]:= p\log{\frac{p}{q}}+(1-p)\log{\frac{1-p}{1-q}}$ stands for the Kullback–Leibler divergence for Bernoulli distribution.

Suppose we adopt the verification tests for $N$ times, and $N_{\text{pass}}$ of them pass. To assess the performance of the quantum device $D$, we evaluate whether $N_{\text{pass}}>p_0 N$. If this condition is met, we categorize $D$ as ``good"; otherwise, we categorize it as ``bad." The subsequent theorem shows the number of required verification tests for this assessment strategy.

\begin{theorem}\label{theo:verification}
For the verification problem described by parameters $\epsilon$, $\delta$, $\tau$, and the verification operator $\Omega$, if $\tau\Delta_{\max}(\Omega)<\Delta_{\min}(\Omega)$ (i.e., $r>1$), one can reach confidence $1-\delta$ with
\begin{equation}\label{eq:veri-N}
N=\frac{\ln{1/\delta}}{D\left[p_0\|1-\tilde{\epsilon}\right]}
\end{equation}
rounds of tests. More specifically, suppose we adopt the verification tests for $N$ times, and $N_{\text{pass}}$ of them pass. If the device is ``good", the probability of getting a low-pass result is upper-bounded by
\begin{equation}\label{eq:goodlow0}
\begin{aligned}
  \mathrm{Pr}\left(N_{\text{pass}}\leq p_0N\right)\leq \delta=e^{-D\left[p_0\|1-\tilde{\epsilon}/r\right]N};
\end{aligned}
\end{equation}
Conversely, if the device is ``bad", the probability of getting a high-pass result is upper-bounded by
\begin{equation}\label{eq:badhigh0}
\begin{aligned}
  \mathrm{Pr}\left(N_{\text{pass}}\geq p_0N\right)\leq \delta=e^{-D\left[p_0\|1-\tilde{\epsilon}\right]N}.
\end{aligned}
\end{equation}
Here the parameters $r$, $\tilde{\epsilon}$ and $p_0$ are defined around Eq.~\eqref{eq:defp0}.
\end{theorem}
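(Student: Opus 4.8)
The plan is to collapse the $N$-round protocol onto a single Bernoulli parameter and then invoke a relative-entropy (Chernoff) tail bound in both directions, exploiting that $p_0$ is placed exactly at the crossover of the two relevant divergences. The pass probability of one round on a state $\rho$ is $q(\rho)=\sum_l p_l\Tr{E_l\rho}=\Tr{\Omega\rho}$, and since the $N$ rounds sample POVMs independently on i.i.d.\ copies, $N_{\text{pass}}\sim\mathrm{Binomial}(N,q(\rho))$. Writing $\Tr{\Omega\rho}=F+\sum_{j>d_{\mathcal V}}\lambda_j\bra{\psi_j}\rho\ket{\psi_j}$ via Fact~\ref{Prop:omega}, with $F:=F(\mathcal P,\rho)$, $\sum_{j>d_{\mathcal V}}\bra{\psi_j}\rho\ket{\psi_j}=1-F$, and every $\lambda_j\in[\,1-\Delta_{\max}(\Omega),\,1-\Delta_{\min}(\Omega)\,]$, one gets
\begin{equation}
1-\Delta_{\max}(\Omega)(1-F)\;\leq\;\Tr{\Omega\rho}\;\leq\;1-\Delta_{\min}(\Omega)(1-F).
\end{equation}
Substituting the hypotheses: if $D$ is ``good'' then $1-F\leq\tau\epsilon$, so $q(\rho)\geq 1-\tau\epsilon\Delta_{\max}(\Omega)=1-\tilde\epsilon/r$; if $D$ is ``bad'' then $1-F\geq\epsilon$, so $q(\rho)\leq 1-\epsilon\Delta_{\min}(\Omega)=1-\tilde\epsilon$. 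These hold uniformly over the admissible $\rho$.

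Next I would locate $p_0$ and record a monotonicity fact. The map $p\mapsto D[p\|1-\tilde\epsilon]-D[p\|1-\tilde\epsilon/r]$ is affine in $p$ (the $p\log p+(1-p)\log(1-p)$ parts cancel), strictly negative at $p=1-\tilde\epsilon$ and strictly positive at $p=1-\tilde\epsilon/r$, and the interval between these endpoints is nonempty precisely because $r>1$; hence it has a unique root, and solving this linear equation reproduces $p_0$ as in Eq.~\eqref{eq:defp0}, with $p_0\in(1-\tilde\epsilon,\,1-\tilde\epsilon/r)$. Moreover $\partial_q D[a\|q]=(q-a)/[q(1-q)]$, so $q\mapsto D[p_0\|q]$ is decreasing for $q<p_0$ and increasing for $q>p_0$.

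Then I would apply the Chernoff bounds $\mathrm{Pr}(X\leq aN)\leq e^{-N D[a\|p]}$ for $a<p$ and $\mathrm{Pr}(X\geq aN)\leq e^{-N D[a\|p]}$ for $a>p$, with $X=N_{\text{pass}}$ and $p=q(\rho)$. For a ``good'' device, $p_0<1-\tilde\epsilon/r\leq q(\rho)$, so $D[p_0\|q(\rho)]\geq D[p_0\|1-\tilde\epsilon/r]$, yielding Eq.~\eqref{eq:goodlow0}; for a ``bad'' device, $p_0>1-\tilde\epsilon\geq q(\rho)$, so $D[p_0\|q(\rho)]\geq D[p_0\|1-\tilde\epsilon]$, yielding Eq.~\eqref{eq:badhigh0}. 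Since $p_0$ equalizes $D[p_0\|1-\tilde\epsilon]$ and $D[p_0\|1-\tilde\epsilon/r]$, the single choice $N=\ln(1/\delta)/D[p_0\|1-\tilde\epsilon]$ drives both misclassification probabilities down to $\delta$, giving Eq.~\eqref{eq:veri-N}.

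The main obstacle is not any single computation but keeping the directions of all inequalities consistent: one must verify $1-\tilde\epsilon<p_0<1-\tilde\epsilon/r$ so that each Chernoff bound is applied to the correct tail and the monotonicity of $q\mapsto D[p_0\|q]$ points the right way in both the ``good'' and ``bad'' cases, together with the boundary conditions ($0<\tilde\epsilon<1$, $\tilde\epsilon/r<1$) that keep all Bernoulli parameters in $[0,1]$. Once these are pinned down, the statement follows from standard large-deviation estimates.
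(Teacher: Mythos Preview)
Your proposal is correct and follows essentially the same approach as the paper: bound the single-round pass probability via the spectral decomposition of $\Omega$ (the paper's Lemma~\ref{lemma:passprob}), observe that $p_0$ lies strictly between $1-\tilde\epsilon$ and $1-\tilde\epsilon/r$ when $r>1$, and then apply the Chernoff--Hoeffding relative-entropy tail bound in each direction. If anything, your treatment is slightly more explicit than the paper's in justifying $p_0\in(1-\tilde\epsilon,1-\tilde\epsilon/r)$ via the affine structure of $p\mapsto D[p\|1-\tilde\epsilon]-D[p\|1-\tilde\epsilon/r]$ and in invoking the monotonicity of $q\mapsto D[p_0\|q]$ to reduce to the worst-case $q(\rho)$.
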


One can prove it by bounding the passing probability, i.e., $\mathbb{E}\left[\frac{N_{\text{pass}}}{N}\right]$, with $1-\tilde{\epsilon}$ and $1-\tilde{\epsilon}/r$ and using the Chernoff-Hoeffding theorem. The detailed proof is given in Appendix \ref{app:prooftheo1}. When $\epsilon$ is sufficiently small, we can simplify Eq.~\eqref{eq:veri-N} to the first order in $\epsilon$: 
\begin{equation}
\begin{aligned}
N\sim f(r)\frac{\ln{1/\delta}}{\Delta_{\min}(\Omega)\epsilon}
\end{aligned}
\end{equation}
where $f(r):=\left(1+\frac{1}{\ln{r}}\left(1-\frac{1}{r}\right)\left(\ln{\left(\frac{1}{\ln{r}}\left(1-\frac{1}{r}\right)\right)}-1\right)\right)^{-1}$ is some constant as a function of $r$. Notably, the scaling of $N$ with respect to $\epsilon$ is $N=O\left(\frac{1}{\epsilon}\right)$, which deviates significantly from the typical scaling of $N=O\left(\frac{1}{\epsilon^2}\right)$ found in fidelity estimation \cite{Flammia2011Direct,Guhne2007toolbox}. This distinct scaling behavior is closely associated with the property of hypothesis testing that ``good" states invariably pass the test. Specifically, the verification operator can be expressed as outlined in Eq.~\eqref{eq:decom}, where the coefficient of $\mathcal{P}$ is $1$.

It is also noteworthy that our framework diverges from previous state verification frameworks \cite{Zhu2019Efficient, Zhu2019General}. Traditional frameworks require the ``good" case of the hypothesis testing problem to satisfy $F(\mathcal{P},\rho)=1$, implying that the device is judged to be ``good" only if all tests are passed. However, physical states are invariably noisy, and it is unrealistic to expect that they will consistently pass every test, rendering these frameworks less effective when the fidelity is not exceptionally high. Our framework modifies the ``good" case to be $F(\mathcal{P},\rho)\geq1-\tau\epsilon$, thereby accommodating scenarios where fidelity may not reach such ideal levels.

Based on Theorem.~\ref{theo:verification}, we can give the following proposition, which considers an estimation scenario rather than a hypothesis testing scenario. We want to derive a bound for the infidelity of the input state $\rho$ with the ratio of passed tests.

\begin{proposition}\label{prop:estimation}
Suppose we adopt the verification tests for $N$ times, and $N_{\text{pass}}$ of them pass. Denote $p=N_{\text{pass}}/N$ as the passing frequency observed in the verification, the estimated infidelity can be bounded in the interval
\begin{equation}\label{eq:estimation}
\begin{aligned}
\max\left\{\frac{1-p-\xi}{\Delta_{\max}(\Omega)},0\right\}\leq\epsilon_{\rho}\leq\min\left\{\frac{1-p+\xi}{\Delta_{\min}(\Omega)},1\right\}
\end{aligned}
\end{equation}
with approximate confidence level $1-\delta$. Here $\xi=z\left(1-\frac{\delta}{2}\right)\sqrt{\frac{p(1-p)}{N}}$ with $z(x)$ standing for the inverse of the cumulative distribution function of the standard normal distribution at $x$.
\end{proposition}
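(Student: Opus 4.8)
The plan is to separate the argument into a deterministic spectral estimate and a statistical confidence step, and then compose them. First I would observe that under the i.i.d.\ assumption the outcome of each round is a Bernoulli trial whose success probability is the single-round passing probability $q:=\Tr{\Omega\rho}=\sum_l p_l\Tr{E_l\rho}$. Applying the spectral decomposition of Fact~\ref{Prop:omega}, one writes $q=\Tr{\mathcal{P}\rho}+\sum_{j=d_{\mathcal{V}}+1}^{d}\lambda_j\bra{\psi_j}\rho\ket{\psi_j}$. Since $\sum_{j=d_{\mathcal{V}}+1}^{d}\bra{\psi_j}\rho\ket{\psi_j}=1-\Tr{\mathcal{P}\rho}=\epsilon_\rho$ and every non-unit eigenvalue obeys $1-\Delta_{\max}(\Omega)\le\lambda_j\le 1-\Delta_{\min}(\Omega)$ by the definitions of the two spectral gaps together with the descending ordering, bounding the convex combination yields
\begin{equation}
1-\Delta_{\max}(\Omega)\,\epsilon_\rho\ \le\ q\ \le\ 1-\Delta_{\min}(\Omega)\,\epsilon_\rho .
\end{equation}
Rearranging gives $\frac{1-q}{\Delta_{\max}(\Omega)}\le\epsilon_\rho\le\frac{1-q}{\Delta_{\min}(\Omega)}$, which is the noise-free template for Eq.~\eqref{eq:estimation} with the true $q$ in place of the estimate.

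The second step is to replace $q$ by the empirical frequency $p=N_{\mathrm{pass}}/N$. Because $N_{\mathrm{pass}}\sim\mathrm{Binomial}(N,q)$, I would invoke the normal (Wald) approximation to conclude that $q\in[p-\xi,\,p+\xi]$ with probability approximately $1-\delta$, for $\xi=z(1-\delta/2)\sqrt{p(1-p)/N}$. Both $\frac{1-q}{\Delta_{\max}(\Omega)}$ and $\frac{1-q}{\Delta_{\min}(\Omega)}$ are monotonically decreasing in $q$, so substituting $q=p+\xi$ into the lower bound on $\epsilon_\rho$ and $q=p-\xi$ into the upper bound, and finally intersecting with the trivial range $\epsilon_\rho\in[0,1]$, reproduces Eq.~\eqref{eq:estimation} at the stated approximate confidence level.

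I expect the only delicate point to be the statistical step, not the spectral one. The interval $[p-\xi,p+\xi]$ is the Wald interval, which is valid only asymptotically in $N$ and, moreover, uses the random quantity $p$ inside $\xi$; this is exactly why the proposition claims an \emph{approximate} confidence $1-\delta$ rather than an exact one. A fully rigorous version would instead use a Clopper--Pearson interval or a Hoeffding/Chernoff tail bound as in Theorem~\ref{theo:verification}, at the cost of a looser or less transparent closed form, and I would record this only as a remark. The remaining items are routine bookkeeping: checking the monotonicity in $q$, verifying that $\lambda_j\in[1-\Delta_{\max}(\Omega),1-\Delta_{\min}(\Omega)]$ indeed follows from the eigenvalue ordering, and noting that when $\Delta_{\min}(\Omega)=0$ the upper bound is vacuous and correctly collapses to the clipped value $1$.
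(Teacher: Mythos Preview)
Your proposal is correct and follows essentially the same route as the paper's proof: the spectral two-sided bound on $q=\Tr{\Omega\rho}$ is precisely Lemma~\ref{lemma:passprob}, and combining it with the Wald (normal-approximation) confidence interval $q\in[p-\xi,p+\xi]$ for the Bernoulli parameter, together with the trivial clipping to $[0,1]$, is exactly what the paper does in Appendix~\ref{app:proofprop1}. Your additional remarks on the approximate nature of the Wald interval and the $\Delta_{\min}(\Omega)=0$ edge case are accurate extras not spelled out in the paper.
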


The proof is given in Appendix \ref{app:proofprop1}. The physical interpretation of Eq.~\eqref{eq:estimation} is straightforward. The term $\xi$ arises from statistical fluctuations, proportional to the square root of $N$, and approaches zero as $N\rightarrow\infty$. The factors $\frac{1}{\Delta_{\max}(\Omega)}$ and $\frac{1}{\Delta_{\min}(\Omega)}$ result from imperfect verification operator design. They will equal $1$ if $\Omega$ is ideally designed, i.e., $\lambda_{d_{\mathcal{V}}+1}=\lambda_d=0$. The remaining term, $1-p$, represents the expected infidelity. Notice that in Eq.~\eqref{eq:estimation}, the uncertainty of $\epsilon$ is proportional to $\frac{1}{\sqrt{N}}$ rather than $\frac{1}{N}$. This distinction indicates that the specific scaling of sample complexity only appears in the hypothesis testing scenario.

Given parameters $\epsilon$, $\delta$, and $\tau$, our goal to minimize the net sample complexity $N$ involves optimizing the construction of the verification operator $\Omega$. In the subsequent two sections, we will focus on developing verification operators for two typical QEC codes: stabilizer codes and QLDPC codes. The main concern is to improve the spectral gaps and reduce the number of different measurement settings, which will be introduced in the next section.

\section{Verification of stabilizer code subspace}\label{sec:stabcode}

Stabilizer codes play an essential role in the designing of quantum error correction code, primarily because the stabilizer formalism facilitates the adaptation of some classical codes into quantum codes. For an $[[n,k,d]]$ stabilizer code, the code subspace $\mathcal{V}$ is characterized by its stabilizer group $\mathbb{S}=\bk{S_i}_{i=1}^m$ where $m=n-k$. The projector onto the code subspace $\mathcal{V}$ can be expressed as 
\begin{equation}\label{eq:Pdecomall}
\begin{aligned}
  \mathcal{P}=\prod_{i=1}^m\frac{\id+S_i}{2}=\frac{1}{2^m}\sum_{S\in\mathbb{S}}S.
\end{aligned}
\end{equation}

If there is no restriction on the measurement, one can directly adopt $\{\mathcal{P}, \id-\mathcal{P}\}$ as the POVM, i.e., the projector itself, to do the verification. In this way, the verification operator is actually $\Omega=\mathcal{P}$ with the spectral gaps $\Delta_{\min}(\mathcal{P})=\Delta_{\max}(\mathcal{P})=1$. Although the performance is optimal, implementing multi-qubit measurements on real quantum devices presents substantial challenges. Therefore, our focus shifts to the verification operator constructed from $s$-local measurements with constant $s$, which can be implemented using the tensor product of at most $s$-qubit measurements. Specifically, when $s=1$, indicating the use of only single-qubit measurements, we refer to these as local measurements in subsequent discussions. The number of local measurement settings required to realize $\Omega$ is a critical parameter in the design of the verification operator.

As all the stabilizers are Pauli operators that can be measured with local measurements, a natural construction is to measure all the stabilizers $S$ in $\mathbb{S}$ with local measurements with equal probability. The verification operator is 
\begin{equation}
\begin{aligned}
\Omega_{\text{all}}=\frac{1}{2^m}\sum_{S\in\mathbb{S}}\frac{\id+S}{2}=\frac{\id+\mathcal{P}}{2}.
\end{aligned}
\end{equation}
The spectral gaps are $\Delta_{\min}(\Omega_{\text{all}})=\Delta_{\max}(\Omega_{\text{all}})=1/2$, suggesting that the number of verification tests required scales as $O\left(\frac{\ln{1/\delta}}{\epsilon}\right)$, independent of $m$. Nevertheless, each $S\in\mathcal{S}$ corresponds to a local measurement setting. The number of measurement settings required scales exponentially with $m$, potentially complicating experimental implementation.

To reduce the number of measurement settings, one can turn to measure all the stabilizer generators $S_i$ of the subspace with equal probability. It needs at most $m$ different measurement settings. The verification operator becomes
\begin{equation}
\begin{aligned}
\Omega_{\text{gen}}=\frac{1}{m}\sum_{i=1}^m \frac{\id+S_i}{2}=\frac{1}{2}\left(\id+\frac1{m}\sum_{i=1}^m S_i\right).
\end{aligned}
\end{equation}
The spectral gaps are given by $\Delta_{\min}(\Omega_{\text{gen}})=1/m$ and $\Delta_{\max}(\Omega_{\text{gen}})=1$, which will be proven in Appendix \ref{app:stabilizerdiscussion}. Due to the non-constant spectral gap, the number of verification tests required scales as $O\left(\frac{\ln{1/\delta}}{\epsilon}m\right)$, which hinders the performance of the strategy for a large $m$.

We further introduce a new construction of $\Omega$ to improve the spectral gap and reduce the number of measurement settings. Our method is inspired by the chromatic property of the graph state \cite{Toth2005Detecting, Toth2005stabilizer,Zhou2019structure,zhao2020constructing}. We start by defining the concept of bit-wise commutativity of stabilizers and the corresponding graph.
\begin{definition}\label{def:satbGraph}
Two stabilizers on the same system, $S$ and $S'$, bit-wise commute with each other iff their reduced operators on any qubit $l$ commute, $[S(l),S'(l)]=0$.

The bit-wise commutativity graph $G_{\mathcal{S}}$ associated with a stabilizer generator set $\mathcal{S}:=\{S_1,S_2,\cdots,S_m\}$ is defined as follows: The vertex set $V=\{v_i\}_{i=1}^m$ represents the stabilizer generators, and the edge set $E=\{e_{ij}\}$ contains an edge $e_{ij}$ iff $S_i$ and $S_j$ do not bit-wise commute.
\end{definition}
We remark that if one focuses on the typical class of stabilizer state, the graph states, the introduced bit-wise commutativity graph becomes the corresponding graph of graph states as one considers the generator associated with each vertex \cite{Hein2006Graph}.

If $S_i$ and $S_j$ bit-wise commute, there should be $S_i(l)=\id$ or $S_j(l)=\id$ or $S_i(l)\cdot S_j(l)=\id$. As a result, one can measure them and their multiplication $S_iS_j$ simultaneously. Take an independent set $\mathcal{I}$ of vertices in $G_{\mathcal{S}}$, any two stabilizers corresponding to the vertices in $\mathcal{I}$ should bit-wise commute, i.e., $S_i$ and $S_j$ bit-wise commute if $v_i,v_j\in\mathcal{I}$. Now, construct a Pauli operator $M_{\mathcal{I}}$: $M_{\mathcal{I}}(l)=S_i(l)$ if there exists $v_i\in \mathcal{I}$ such that $S_i(l)\neq\id$; otherwise, $M_{\mathcal{I}}(l)=\id$. One can find that any $S_i$ corresponding to $v_i\in\mathcal{{I}}$ or their multiplication is a reduced operator of $M_{\mathcal{I}}$. Therefore, all of the stabilizers whose corresponding vertices are in $\mathcal{I}$ and their multiplications can be measured simultaneously with just a single measurement setting $M_{\mathcal{I}}$. Here, a measurement setting $M_{\mathcal{I}}$ means to assign the measurement basis according to the single qubit Pauli operators of $M_{\mathcal{I}}$. Furthermore, the POVM $\{\mathcal{P}_{\mathcal{I}},\id-\mathcal{P}_{\mathcal{I}}\}$ where 
\begin{equation}\label{eq:PIu}
\begin{aligned}
\mathcal{P}_{\mathcal{I}}=\prod_{v_i\in \mathcal{I}}\frac{\id+S_i}{2}
\end{aligned}
\end{equation}
can be realized by local measurement setting $M_{\mathcal{I}}$ and post-processing.

\begin{figure}[htbp]
    \centering
    \subfigure[]{\includegraphics[width=0.45\linewidth]{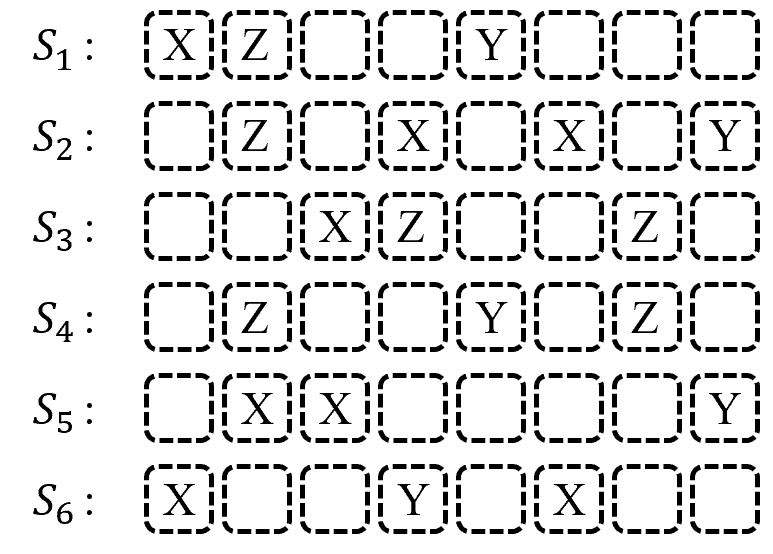}}
    \subfigure[]{\includegraphics[width=0.35\linewidth]{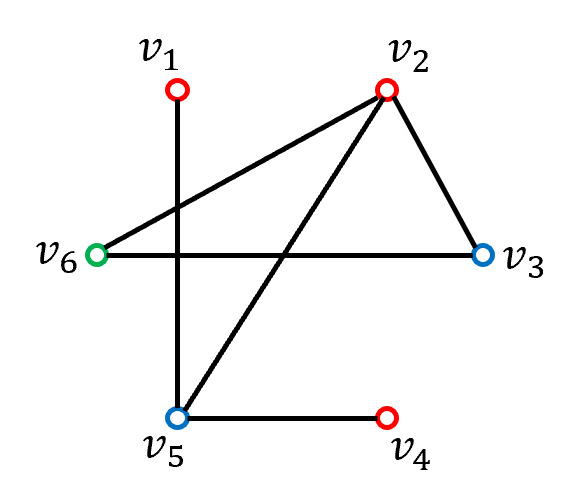}}
    \subfigure[]{\includegraphics[width=0.45\linewidth]{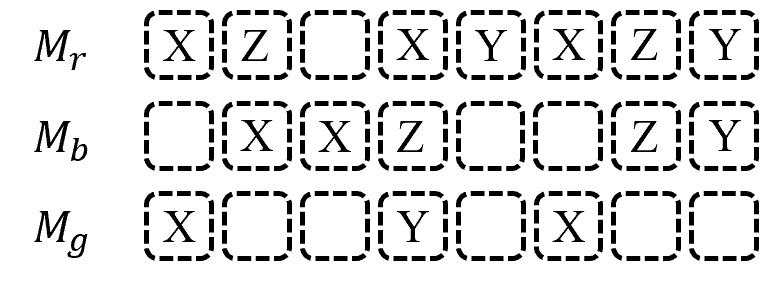}}
    \caption{\textbf{Construction of $\Omega_{\text{chr}}(\mathcal{S})$:} (a) an example set of stabilizer generators; (b) the corresponding bit-wise commutativity graph of stabilizers in (a); (c) measurement settings according to the chromatic method in (b).}
    \label{fig:graphverification}
\end{figure}

An optimal coloring method $\{\mathcal{I}_u\}_{u=1}^{\chi(G_{\mathcal{S}})}$ of $G_{\mathcal{S}}$ divides vertices into a minimal number of independent sets, i.e., $V=\bigcup_{u=1}^{\chi(G_{\mathcal{S}})} \mathcal{I}_u$ and $\mathcal{I}_u\cap \mathcal{I}_v=\emptyset$ for $u\neq v$. $\chi(G_{\mathcal{S}})$ stands for the chromatic number of $G_{\mathcal{S}}$. We can construct a verification operator as
\begin{equation}\label{Eq:OmegaColor}
\begin{aligned}
\Omega_{\text{chr}}(\mathcal{S})=\frac{1}{\chi(G_{\mathcal{S}})}\sum_{u=1}^{\chi(G_{\mathcal{S}})}\mathcal{P}_{\mathcal{I}_u}=\frac{1}{\chi(G_{\mathcal{S}})}\sum_{u=1}^{\chi(G_{\mathcal{S}})}\prod_{v_i\in \mathcal{I}_u}\frac{\id+S_i}{2}.
\end{aligned}
\end{equation}
It is clear that $\Omega_{\text{chr}}(\mathcal{S})$ is a legal verification operator and can be measured with $\chi(G_{\mathcal{S}})$ measurement settings. The spectral gaps show $\Delta_{\min}(\Omega_{\text{chr}}(\mathcal{S}))=1/\chi(G_{\mathcal{S}})$ and $\Delta_{\max}(\Omega_{\text{chr}}(\mathcal{S}))=1$, which will be proved in the Appendix \ref{app:stabilizerdiscussion}. 

Notice that the construction of $\Omega_{\text{chr}}(\mathcal{S})$ is uniquely determined by the set of stabilizer generators $\mathcal{S}$ rather than the stabilizer group $\mathbb{S}$. As there are different choices of $\mathcal{S}$ given $\mathbb{S}$, one can further optimize the construction of $\Omega_{\text{chr}}(\mathcal{S})$ by choosing $\mathcal{S}$ such that the chromatic number $\chi(G_{\mathcal{S}})$ is minimal. Denote the optimal set of stabilizer generators as $\mathcal{S}^{*}$ and the corresponding chromatic number as $\chi^{*}$. The optimal choice of $\Omega_{\text{chr}}(\mathcal{S})$ is thus
\begin{equation}\label{Eq:OmegaOpt}
\begin{aligned}
\Omega_{\text{chr}}^{*}=\frac{1}{\chi^{*}}\sum_{u=1}^{\chi^{*}}\mathcal{P}_{\mathcal{I}_u},
\end{aligned}
\end{equation}
with $\chi^{*}$ measurement settings and spectral gaps $\Delta_{\min}(\Omega_{\text{chr}}^{*})=1/\chi^{*}$, $\Delta_{\max}(\Omega_{\text{chr}}^{*})=1$.

Compared with $\Omega_{\text{gen}}$, the verification operator $\Omega_{\text{chr}}^{*}$ reduces the number of measurement setting required from $m$ to $\chi^{*}$, and improve the spectral gap $\Delta_{\min}(\Omega)$ from $1/m$ to $1/\chi^{*}$. For large-scale QEC codes, the optimal chromatic number $\chi^{*}$ usually scales much slower than $m$ and sometimes may even be a constant. For example, CSS codes \cite{Calderbank1996CSScode,Steane1996CSScode}, including various practical QEC codes such as surface codes \cite{Bravyi1998Quantum}, toric codes \cite{KITAEV2003Fault}, and hypergraph product codes \cite{Tillich2014HGPcode}, always have optimal chromatic number $\chi^{*}=2$, showing a significant performance improvement \cite{Toth2005Detecting, Zhou2019structure}.

\section{Verification of QLDPC code subspace}
Quantum low-density parity-check (QLDPC) codes with a constant encoding rate have been demonstrated to reduce the overhead of fault-tolerant quantum computation to a constant level \cite{Gottesman2014Faulttolerant}. This significant reduction positions them as a promising candidate for fault-tolerant quantum computing. For an $[[n,k,d]]$ QLDPC code, the code subspace $\mathcal{V}$ is uniquely determined by $m=n-k$ pairwise commuting projectors $\Pi=\{\pi_i\}_{i=1}^m$, such that any quantum state $\ket{\psi}\in\mathcal{V}$ should satisfies $\pi_i\ket{\psi}=\ket{\psi}$ for all $i\in [m]$. The projector onto the QLDPC code subspace $\mathcal{V}$ is given by $\mathcal{P}=\prod_{i=1}\pi_i$, analogous to stabilizer codes. We denote a subsystem $A=\text{supp}(O)$ as the non-trivial support of any operator $O$ iff $O$ can be decomposed as $O=O_A \otimes \id_{\bar{A}}$, and one cannot find a smaller subsystem $A'\subset A$ such that $O=O_{A'}\otimes \id_{\bar{A'}}$. The sparsity of $\Pi$ is defined as $s:=\max_i \left|\text{supp}(\pi_i)\right|$. For QLDPC codes, $s$ should be some constant independent of $n$.

We can also define a support graph, analog to the previously introduced Def.~\ref{def:satbGraph} for stabilizer states.
\begin{definition}
The support graph $G_{\Pi}$ associated with the set of projectors $\Pi=\{\pi_i\}_{i=1}^m$ is defined as follows: The vertex set $V=\{v_i\}_{i=1}^m$ represents the projectors, and the edge set $E=\{e_{ij}\}$ contains an edge $e_{ij}$ iff $\text{supp}(\pi_i)\cap\text{supp}(\pi_j)\neq\emptyset$.
\end{definition}
The QLDPC condition requires that each qubit appears in no more than $s$ supports of the projectors. Combining with the fact that each projector acts non-trivially on at most $s$ qubits, the support of any projector intersects with at most $s^2$ other supports, which implies that the degree of $G_{\Pi}$ is at most $s^2$. Consequently, the support graph $G_{\Pi}$ of any QLDPC code with sparsity $s$ can be colored with no more than $s^2$ colors.

If $s$-local measurements are permitted, verification operators similar to those discussed in the previous section can be constructed, as each POVM $\{\pi_i,\id-\pi_i\}$ can be realized with a single $s$-local measurement. To be specific, by measuring each projector with equal probability, we can construct 
\begin{equation}\label{eq:LDPCgens}
\begin{aligned}
\Omega_{\text{gen}}^{(s)}=\frac{1}{m}\sum_{i=1}^m \pi_i
\end{aligned}
\end{equation}
with spectral gaps $\Delta_{\min}(\Omega_{\text{gen}}^{(s)})=1/m$ and $\Delta_{\max}(\Omega_{\text{gen}}^{(s)})=1$. Here, the superscript $^{(s)}$ implies $s$-local measurements are used. It is not hard to find that $\Omega_{\text{gen}}^{(s)}$ can be realized with $m$ different $s$-local measurement settings. Moreover, with a coloring method of the support graph $G_{\Pi}$, we can construct 
\begin{equation}\label{eq:LDPCchrs}
\begin{aligned}
\Omega_{\text{chr}}^{(s)}=\frac{1}{\chi(G_{\Pi})}\sum_{u=1}^{\chi(G_{\Pi})}\prod_{v_i\in\mathcal{I}_u}\pi_i
\end{aligned}
\end{equation}
with spectral gaps $\Delta_{\min}(\Omega_{\text{chr}}^{(s)})=1/\chi(G_{\Pi})\geq 1/s^2$ and $\Delta_{\max}(\Omega_{\text{chr}}^{(s)})=1$. Each $\prod_{v_i\in\mathcal{I}_u}\pi_i$ can be measured with a single measurement setting since different $\pi_i$ in the same independent set act on different supports, which indicates that $\Omega_{\text{chr}}^{(s)}$ can be realized with $\chi(G_{\Pi})\leq s^2$ different measurement settings.

For QLDPC stabilizer codes \cite{Gottesman2014Faulttolerant, Panteleev2022GoodLDPC, xu2024constant}, since all the $s$-local projectors are $+1$ projectors of Pauli operators, they can be measured with $1$-local measurements and post-processing. However, in general, $\{\pi_i,\id-\pi_i\}$ cannot be directly implemented by a single local measurement setting and post-processing. The implementation of $s$-local measurements often faces practical challenges due to the geometric limitations of the qubits. Consequently, there is a continued need to develop methodologies that utilize $1$-local measurements.

To address this problem, we decompose the projectors $\pi_i$ into Pauli measurements that are feasible to implement. We also present the associated verification operator along with its spectral information. Further details and proofs are provided in Appendix \ref{app:qldpcdiscussion}.

Let $T_i=2\pi_i-\id$ as the measurement operator of $\{\pi_i,\id-\pi_i\}$. One can decompose $T_i=\sum_{P_j\in \mathbb{P}_n} t_{ij}P_j$ where $t_{ij}=\Tr{T_i P_j}$ are positive numbers and $\mathbb{P}_n$ denotes the $n$-qubit Pauli group with only $\pm 1$ phase. All the $P_j$ with nonzero coefficient $t_{ij}$ should act on the sub-supports of $A_i$, which is the support of $T_i$ (or $\pi_i$). Denote each set of Pauli operators with nonzero $t_{ij}$ with $\mathcal{T}_i$, and let $a_i=\sum_{j}t_{ij}$, $p_{ij}=t_{ij}/a_i$ for simplicity. Therefore, $\{p_{ij}\}_j$ can be regarded as a probability distribution. Since $t_{ij}\in[0,1]$ and $\frac{1}{2^n}\Tr{T_i}=\sum_{j}t_{ij}^2=1$, we can obtain $1\leq a_i\leq \sqrt{|\mathcal{T}_i|}\leq 2^s$. The decomposition can be written as $T_i=a_i\sum_{P_{i,j}\in \mathcal{T}_i} p_{ij}P_{i,j}$. Based on the decomposition, the verification operator can be constructed as 
\begin{equation}\label{eq:LDPCgen1}
\begin{aligned}
\Omega_{\text{gen}}^{(1)}&=\frac{1}{m}\sum_i\left(\frac{1}{a_i}\pi_i+\left(1-\frac{1}{a_i}\right)\frac{\id}{2}\right)\\
&=\frac{1}{m}\sum_i\left(\frac{1}{2}\sum_{P_{i,j}\in \mathcal{T}_i} p_{ij}P_{i,j}+\frac{\id}{2}\right)\\
&=\frac{1}{m}\sum_i\sum_{P_{i,j}\in \mathcal{T}_i}p_{ij}\frac{\id+P_{i,j}}{2}.
\end{aligned}
\end{equation}

The corresponding spectral gaps are given by 
\begin{equation}\label{}
\begin{aligned}
\Delta_{\min}(\Omega_{\text{gen}}^{(1)})=\frac{1}{m\cdot\max_i a_i},\ \Delta_{\max}(\Omega_{\text{gen}}^{(1)})=\frac{1}{m}\sum_i\frac{1}{a_i}.
\end{aligned}
\end{equation}
Since $1\leq a_i\leq \sqrt{|\mathcal{T}_i|}\leq 2^s$, the spectral gaps $\Delta_{\min}(\Omega_{\text{gen}}^{(1)})\geq \frac{1}{2^s m}$ and $\Delta_{\max}(\Omega_{\text{gen}}^{(1)})\geq \frac{1}{2^s}$. Meanwhile, since $\{p_{ij}\}_j$ can be regarded as a probability distribution, such a verification operator can be realized by measuring $P_{i,j}$ with probability $p_{ij}$, totally no more than $\sum_i |\mathcal{T}_i|\leq m\cdot 4^s$ measurement settings. 

Importantly, note that $\Omega_{\text{gen}}^{(1)}$ does not have the maximum eigenvalue $+1$ anymore, as shown in Appendix \ref{app:qldpcdiscussion}, which implies that perfect QLDPC code states might not pass the test. This indicates that the analysis in Theorem \ref{theo:verification} no longer applies. Therefore, we give the following refined treatment, which would be of independent interest.

\begin{proposition}\label{prop:imperfectomega}
For the verification problem described by parameters $\epsilon$, $\delta$, $\tau$, and the verification operator $\Omega$, suppose the verification operator $\Omega$ has the maximum eigenvalue $\lambda<1$, i.e., $\Omega=\lambda \mathcal{P}+\sum_{j=d_{\mathcal{V}}+1}^d \lambda_j\ketbra{\psi_j}{\psi_j}$. If $\tau\Delta_{\max}(\Omega)<\Delta_{\min}(\Omega)$, one can reach confidence $1-\delta$ with the judgement passing frequency $p'_0$ given in Eq.\eqref{eq:pprime0} as a function of $\lambda$, $\tau$, $\epsilon$ and spectral gaps of $\Omega$, and 
\begin{equation}\label{eq:veri-Nprime}
N\sim \frac{8\lambda(1-\lambda)\ln{1/\delta}}{\left(\Delta_{\min}(\Omega)-\Delta_{\max}(\Omega)\tau\right)^2 \epsilon^2}.
\end{equation}
rounds of tests. In other words, if $N_{\text{pass}}\geq p'_0 N$, we categorize $D$ as ``good"; otherwise, we categorize it as ``bad."
\end{proposition}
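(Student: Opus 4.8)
\emph{Proof proposal.} The plan is to adapt the proof of Theorem~\ref{theo:verification}, the essential new difficulty being that the top eigenvalue $\lambda$ of $\Omega$ is strictly below $1$, so that even a perfect code state passes a given round only with probability $\lambda$; this is precisely what downgrades the sample complexity from $O(1/\epsilon)$ to $O(1/\epsilon^2)$, so the Bernoulli parameter can no longer be pinned near $1$ and the first-order analysis of Theorem~\ref{theo:verification} does not survive.

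First I would compute the single-shot passing probability. For an input state $\rho$ with infidelity $\epsilon_\rho := 1 - \Tr{\mathcal{P}\rho}$, the spectral decomposition gives $\Tr{\Omega\rho} = \lambda(1-\epsilon_\rho) + \sum_{j} \lambda_j \langle\psi_j|\rho|\psi_j\rangle$, where $\sum_j \langle\psi_j|\rho|\psi_j\rangle = 1-\Tr{\mathcal{P}\rho} = \epsilon_\rho$. Bounding each $\lambda_j$ between $\lambda - \Delta_{\max}(\Omega)$ and $\lambda - \Delta_{\min}(\Omega)$ (the gaps now measured from the top eigenvalue $\lambda$, i.e. $\Delta_{\min}(\Omega)=\lambda-\lambda_{d_{\mathcal V}+1}$, $\Delta_{\max}(\Omega)=\lambda-\lambda_d$) yields $\lambda - \Delta_{\max}(\Omega)\,\epsilon_\rho \le \Tr{\Omega\rho} \le \lambda - \Delta_{\min}(\Omega)\,\epsilon_\rho$. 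Consequently a ``good'' device ($\epsilon_\rho \le \tau\epsilon$) passes with probability at least $q_g := \lambda - \tau\Delta_{\max}(\Omega)\epsilon$, and a ``bad'' device ($\epsilon_\rho \ge \epsilon$) passes with probability at most $q_b := \lambda - \Delta_{\min}(\Omega)\epsilon$; the hypothesis $\tau\Delta_{\max}(\Omega) < \Delta_{\min}(\Omega)$ is exactly $q_b < q_g$.

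Next I would fix the judgment frequency $p'_0$ to be the unique point in $(q_b, q_g)$ solving $D[p'_0\|q_g] = D[p'_0\|q_b]$, which gives the closed form quoted as Eq.~\eqref{eq:pprime0} by the same algebra that produced Eq.~\eqref{eq:defp0}, with $1-\tilde\epsilon$ and $1-\tilde\epsilon/r$ replaced by $q_b$ and $q_g$. Applying the relative-entropy Chernoff--Hoeffding bound to the binomial $N_{\text{pass}}$, and using that $q\mapsto D[p'_0\|q]$ is increasing for $q>p'_0$ and decreasing for $q<p'_0$ (so the one-sided bounds on the true passing probability only help), a ``good'' device satisfies $\mathrm{Pr}(N_{\text{pass}} \le p'_0 N) \le e^{-N D[p'_0\|q_g]}$ and a ``bad'' one satisfies $\mathrm{Pr}(N_{\text{pass}} \ge p'_0 N) \le e^{-N D[p'_0\|q_b]}$; by the choice of $p'_0$ the two exponents coincide, so setting them equal to $\ln(1/\delta)$ gives $N = \ln(1/\delta)/D[p'_0\|q_b]$ at confidence $1-\delta$.

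Finally I would extract the $\epsilon\to 0$ asymptotics. Both $q_g$ and $q_b$ tend to $\lambda$, hence $p'_0\to\lambda$; carrying the expansion of Eq.~\eqref{eq:pprime0} to second order shows $p'_0 = (q_g+q_b)/2 + O(\epsilon^2)$, so $q_g - p'_0 = p'_0 - q_b = \frac12(\Delta_{\min}(\Omega) - \tau\Delta_{\max}(\Omega))\epsilon + O(\epsilon^2)$. Plugging this into the Taylor expansion $D[p\|q] = \frac{(p-q)^2}{2p(1-p)} + O((p-q)^3)$ about $q=p$, together with $p'_0\to\lambda$, yields $D[p'_0\|q_b] \sim \frac{(\Delta_{\min}(\Omega)-\tau\Delta_{\max}(\Omega))^2 \epsilon^2}{8\lambda(1-\lambda)}$, and substitution into $N=\ln(1/\delta)/D[p'_0\|q_b]$ gives Eq.~\eqref{eq:veri-Nprime}. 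The main obstacle is exactly this last step: one must retain the $O(\epsilon)$ correction to $p'_0$ -- it is that correction, not the leading value $\lambda$, that balances the two gaps $q_g-p'_0$ and $p'_0-q_b$ to $\mu\epsilon/2$ with $\mu=\Delta_{\min}(\Omega)-\tau\Delta_{\max}(\Omega)$ -- and keep the quadratic term of the Kullback--Leibler divergence to land the precise prefactor $8\lambda(1-\lambda)$; along the way one also checks $p'_0\in(0,1)$ and that the monotonicity of $D[p'_0\|\cdot]$ is invoked on the correct side of $p'_0$.
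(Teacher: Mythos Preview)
Your proposal is correct and follows essentially the same route as the paper's proof in Appendix~\ref{app:proofprop2}: bound the single-round passing probability by $\lambda-\Delta_{\max}(\Omega)\epsilon_\rho$ and $\lambda-\Delta_{\min}(\Omega)\epsilon_\rho$, choose $p'_0$ to equalize the two Kullback--Leibler exponents, apply Chernoff--Hoeffding on each side, and then expand $D[p'_0\|q_b]$ to second order in~$\epsilon$. Your treatment of the asymptotic step---showing $p'_0=(q_g+q_b)/2+O(\epsilon^2)$ and invoking $D[p\|q]\approx(p-q)^2/(2p(1-p))$---is in fact more explicit than the paper, which simply states the second-order simplification.
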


The proof is analogous to that of Theorem \ref{theo:verification} and given in Appendix \ref{app:proofprop2}. If one regards $\lambda$ as a constant for general case, $N=O\left(\frac{\ln{1/\delta}}{\Delta_{\min}^2(\Omega)\epsilon^2}\right)$ scales differently as that in Theorem \ref{theo:verification}. This difference comes from the condition that the maximum eigenvalue of $\Omega$ is no longer $1$, i.e., the states in the target subspace may also fail to pass the test. For $\Omega_{\text{gen}}^{(1)}$, the maximum eigenvalue, denoted as $\lambda$, is calculated as $\lambda=\frac{1}{2}\left(1+\frac{1}{m}\sum_i \frac{1}{a_i}\right)$, where it is trivially bounded within the interval $[\frac{1}{2},1]$. Consequently, $\lambda$ can be considered a constant, and the number tests required scales $N=O\left(\frac{m^2\ln{1/\delta}}{\epsilon^2}\right)$.

One may expect the enhancement construction of $\Omega_{\text{chr}}^{(1)}$, analogous to that in Eq.\eqref{eq:LDPCchrs}, which can be implemented with $1$-local measurement settings. However, such construction typically results in exponentially small eigenvalues, making it ineffective for verification tasks. Detailed discussion is given in Appendix \ref{app:qldpcdiscussion}.

\section{Fidelity estimation of logical states}

A common challenge in quantum information processing and quantum computing involves estimating the fidelity between a quantum state $\rho$ and a target state $\ket{\psi}$. Two prevalent methods, quantum state verification \cite{Sam2018Optimal, Zhu2019Efficient, Zhu2019General, yu2019optimal}, and direct fidelity estimation \cite{Flammia2011Direct, Guhne2007toolbox, Seshadri2024versatile}, exhibit specific constraints respectively. Quantum state verification, while relatively efficient, typically relies on the stabilizer structure of the target state $\ket{\psi}$, making it difficult to develop efficient protocols for general magic states \cite{Zhu2019Hypergraph}. Conversely, direct fidelity estimation offers broader applicability across various quantum states but demands substantial resources, especially for magic states. The sample complexity of this method scales exponentially with the number of qubits $n$, specifically $O\left(\frac{1}{\epsilon^2\delta}+\frac{2^n}{\epsilon^2}\ln{1/\delta}\right)$, rendering it unsuitable for large systems. By integrating quantum subspace verification with direct fidelity estimation, we propose a composite protocol for estimating the fidelity between a state $\rho$ and a QEC code state $\ket{\psi}$ as follows, where $\ket{\psi}$ could be any magic state in the code space.

One first adopts the subspace verification protocol to verify the fidelity between the given state $\rho$ and the code subspace $\mathcal{V}$: $\Tr{\mathcal{P}_{\mathcal{V}}\rho}\geq 1-\epsilon_1$ with confidence $1-\delta_1$. Specifically, based on Eq.~\eqref{eq:badhigh0} in Theorem \ref{theo:verification}, one has
\begin{equation}
\mathrm{Pr}\left(N_{\text{pass}}\geq p_0N_1|\Tr{\mathcal{P}_{\mathcal{V}}\rho}\leq 1-\epsilon_1\right)\leq\delta_1.
\end{equation}

Following this, we creatively implement the direct fidelity estimation scheme on the logical qubits, specifically aiming to estimate the ``logical fidelity" defined as
\begin{equation}
\bar{F}:=\frac{1}{2^k}\sum_{\bar{P}}\Tr{\bar{P}\ketbra{\psi}{\psi}}\cdot\Tr{\bar{P}\rho},
\end{equation}
where $\bar{P}$ denotes all the logical Pauli operators on the $k$-dimensional logical subspace. Following direct fidelity estimation, we construct an unbiased estimator $Y$ such that $\mathbb{E}[Y]=\bar{F}$, and by Chebyshev’s inequality \cite{Flammia2011Direct}, 
\begin{equation}
\mathrm{Pr}\left(Y\geq \mathbb{E}[Y]+\epsilon_2\right)\leq\delta_2.
\end{equation}
Detailed illustrations are given in Appendix \ref{app:prooftheo2}.

The previous two steps can return an estimation of the target fidelity, and the following theorem shows the net sample complexity of the composite protocol for the fidelity estimation.
\begin{theorem}\label{theo:QSV+DFE} 
Suppose one applies the composite protocol to estimate the fidelity between the prepared state $\rho$ with the target state $\ket{\psi}$ in the code subspace, namely $\Tr{\ketbra{\psi}{\psi}\rho}$.
Assume that in the initial step, the input states $\rho$ pass the subspace verification using $N_1$ samples with verification operator $\Omega$, i.e., $N_{\text{pass}} \geq p_0 N_1$, under parameters $\epsilon$ and $\delta$; In the subsequent step, one obtains the estimator $Y$ using $N_2$ samples under the same parameters. 
It can be concluded that the fidelity $\Tr{\ketbra{\psi}{\psi}\rho} \geq Y - 2\epsilon$ with a confidence level of $1 - 2\delta$. The overall sample complexity is given by $N = N_1+N_2=O\left(\frac{\ln{1/\delta}}{\Delta_{\min}(\Omega) \epsilon}\right) + O\left(\frac{1}{\epsilon^2 \delta} + \frac{2^k \ln{1/\delta}}{\epsilon^2}\right)$.
\end{theorem}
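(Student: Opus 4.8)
The plan is to decompose the statement into its two claims: (i) the error guarantee $\Tr{\ketbra{\psi}{\psi}\rho}\geq Y-2\epsilon$ with confidence $1-2\delta$, and (ii) the sample-complexity bookkeeping. For (i), I would first observe the elementary operator identity relating the true fidelity, the logical fidelity, and the subspace fidelity. Writing $\mathcal{P}_{\mathcal{V}}$ for the code projector and $\rho_{\mathcal{V}}=\mathcal{P}_{\mathcal{V}}\rho\mathcal{P}_{\mathcal{V}}/\Tr{\mathcal{P}_{\mathcal{V}}\rho}$ for the (sub-normalized) projected state, the key inequality is
\begin{equation}
\Tr{\ketbra{\psi}{\psi}\rho}\;\geq\;\bar{F}-\bigl(1-\Tr{\mathcal{P}_{\mathcal{V}}\rho}\bigr),
\end{equation}
which follows because $\ket{\psi}\in\mathcal{V}$ means $\Tr{\ketbra{\psi}{\psi}\rho}=\Tr{\ketbra{\psi}{\psi}\mathcal{P}_{\mathcal{V}}\rho\mathcal{P}_{\mathcal{V}}}$, and the logical fidelity $\bar{F}$ equals $\Tr{\ketbra{\psi}{\psi}\mathcal{P}_{\mathcal{V}}\rho\mathcal{P}_{\mathcal{V}}}$ up to the normalization factor $\Tr{\mathcal{P}_{\mathcal{V}}\rho}$ — the $\frac{1}{2^k}\sum_{\bar P}$ is exactly the resolution of identity on the logical subspace, so $\bar{F}$ is the logical-space fidelity rescaled by the subspace weight. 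Bounding $\Tr{\mathcal{P}_{\mathcal{V}}\rho}\leq 1$ and $\geq 1-\epsilon_1$ then gives $\Tr{\ketbra{\psi}{\psi}\rho}\geq\bar{F}-\epsilon_1$, and I would verify this carefully since the rescaling direction matters.

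Next I would chain the two probabilistic guarantees. Step one of the protocol certifies, via Eq.~\eqref{eq:badhigh0}, that conditioned on $N_{\text{pass}}\geq p_0 N_1$ the event $\Tr{\mathcal{P}_{\mathcal{V}}\rho}\leq 1-\epsilon_1$ has probability at most $\delta_1$; equivalently, with probability $\geq 1-\delta_1$ we may assume $\Tr{\mathcal{P}_{\mathcal{V}}\rho}\geq 1-\epsilon_1$. Step two gives, with probability $\geq 1-\delta_2$, that the observed estimator satisfies $Y\leq\bar{F}+\epsilon_2$, i.e. $\bar{F}\geq Y-\epsilon_2$. A union bound over the two failure events yields, with probability $\geq 1-\delta_1-\delta_2$,
\begin{equation}
\Tr{\ketbra{\psi}{\psi}\rho}\;\geq\;\bar{F}-\epsilon_1\;\geq\;Y-\epsilon_1-\epsilon_2.
\end{equation}
Setting $\epsilon_1=\epsilon_2=\epsilon$ and $\delta_1=\delta_2=\delta$ gives exactly $\Tr{\ketbra{\psi}{\psi}\rho}\geq Y-2\epsilon$ with confidence $1-2\delta$, proving (i).

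For (ii), the count is additive: $N=N_1+N_2$. The first term $N_1$ is supplied directly by Theorem~\ref{theo:verification} in its small-$\epsilon$ form, $N_1=O\!\left(\frac{\ln(1/\delta)}{\Delta_{\min}(\Omega)\epsilon}\right)$. The second term $N_2$ comes from applying the standard direct-fidelity-estimation analysis of Flammia--Liu, but crucially \emph{on the $k$ logical qubits rather than the $n$ physical ones}: the estimator averages over the $4^k$ logical Pauli operators $\bar P$, importance-sampled by the weights $\Tr{\bar P\ketbra{\psi}{\psi}}^2/2^k$, so the Chebyshev-type bound of \cite{Flammia2011Direct} gives $N_2=O\!\left(\frac{1}{\epsilon^2\delta}+\frac{2^k\ln(1/\delta)}{\epsilon^2}\right)$, with $2^k$ replacing the $2^n$ of naive DFE. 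Summing yields the stated $N$. I would also note in passing that each logical-Pauli measurement on $\rho$ is realized by measuring a physical representative $\bar P$ (a physical Pauli), which is why no further overhead appears.

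The main obstacle I anticipate is making the reduction in step one airtight — specifically, justifying that measuring a \emph{physical} Pauli representative of each logical operator $\bar P$ on the post-verification state $\rho$ (not the idealized $\rho_{\mathcal{V}}$) still yields an unbiased estimator of $\bar{F}$ as defined, and that the bias incurred by $\rho$ having weight outside $\mathcal{V}$ is captured entirely by the $-\epsilon_1$ slack rather than leaking into the variance. This requires that the logical-Pauli expectation $\Tr{\bar P\rho}$, using any physical representative, coincides with what one would get on the logical subspace up to terms controlled by $1-\Tr{\mathcal{P}_{\mathcal{V}}\rho}$ — the earlier inequality handles the bias cleanly, but one must check the variance bound of the DFE step does not degrade. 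I expect this to go through because the estimator's variance is bounded by $1$ uniformly (the $\frac{1}{\epsilon^2\delta}$ term) independent of the state, so the physical-vs-logical distinction only affects the mean, which is exactly where the $\epsilon_1$ correction sits.
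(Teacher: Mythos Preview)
Your overall architecture matches the paper's: chain the subspace-verification guarantee (Eq.~\eqref{eq:badhigh0}) with the DFE one-sided tail bound via a union bound, then set $\epsilon_1=\epsilon_2=\epsilon$ and $\delta_1=\delta_2=\delta$; the sample-complexity accounting in part (ii) is also correct.

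The gap is in your justification of the key inequality $\Tr{\ketbra{\psi}{\psi}\rho}\geq\bar F-\epsilon_1$. You assert that $\bar F$ equals $\Tr{\ketbra{\psi}{\psi}\mathcal P_{\mathcal V}\rho\mathcal P_{\mathcal V}}$ ``up to the normalization factor $\Tr{\mathcal P_{\mathcal V}\rho}$'', treating $\frac{1}{2^k}\sum_{\bar P}$ as a resolution of the identity on the code space. But the logical Pauli representatives $\bar P$ are \emph{physical} operators that commute with $\mathcal P_{\mathcal V}$ and hence act nontrivially on $\mathcal V^{\perp}$ as well. Consequently $\Tr{\bar P\rho}$ carries a genuine contribution from $(\id-\mathcal P_{\mathcal V})\rho(\id-\mathcal P_{\mathcal V})$ that has nothing to do with any normalization: writing $Q=\frac{1}{2^k}\sum_{\bar P}\Tr{\bar P\ketbra{\psi}{\psi}}\bar P$, one has $Q\mathcal P_{\mathcal V}=\ketbra{\psi}{\psi}$ but $Q(\id-\mathcal P_{\mathcal V})\neq 0$, so $\bar F-\Tr{\ketbra{\psi}{\psi}\rho}=\Tr{Q(\id-\mathcal P_{\mathcal V})\rho(\id-\mathcal P_{\mathcal V})}$, and this is precisely the bias term you must control. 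Your final paragraph correctly flags this as the main obstacle but then appeals back to ``the earlier inequality'' to dispose of it---which is circular, since that inequality was derived from the flawed normalization picture. The paper closes the gap by decomposing $(\id-\mathcal P_{\mathcal V})\rho(\id-\mathcal P_{\mathcal V})=\sum_j\lambda_j\ketbra{\phi_j}{\phi_j}$ with $\sum_j\lambda_j\leq\epsilon_1$ and applying Cauchy--Schwarz together with the bound $\sum_{\bar P}\bigl(\Tr{\bar P\ketbra{\phi}{\phi}}\bigr)^2\leq 2^k$ for any pure $\ket{\phi}$ (proved via the decoding unitary), yielding $\bigl|\bar F-\Tr{\ketbra{\psi}{\psi}\rho}\bigr|\leq\epsilon_1$. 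An equivalent shortcut you could use instead: under the decoding unitary $U$ one has $UQU^{\dagger}=\ketbra{\psi'}{\psi'}_k\otimes\id_{n-k}$, so $Q$ is a projector; then $0\leq Q\leq\id$ and $\sigma:=(\id-\mathcal P_{\mathcal V})\rho(\id-\mathcal P_{\mathcal V})\geq 0$ with $\Tr{\sigma}\leq\epsilon_1$ give $0\leq\Tr{Q\sigma}\leq\epsilon_1$ directly.
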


The proof is given in Appendix \ref{app:prooftheo2}. As shown in Section.~\ref{sec:stabcode}, $\Omega$ can be constructed such that $\Delta_{\min}(\Omega)\geq \frac{1}{m}$ for stabilizer codes with $m=n-k$. As a result, in most cases, the second term is the leading term. The sample complexity of our protocol is reduced by $2^m$ times compared to that of direct fidelity estimation, making it much more efficient than directly applying that method. For CSS codes and QLDPC stabilizer codes, $\Delta_{\min}(\Omega)$ could be a constant, which implies that the sample complexity is only related to the logical qubit number, independent of the physical qubit number.

\section{Conclusion and Discussion}
In this work, we expand the concept of quantum state verification by proposing a framework for quantum subspace verification. This framework addresses the challenge of benchmarking QEC codes in physical systems. Our theoretical analysis reveals that the spectral gaps of the verification operator play a critical role in determining the number of measurements required. We construct various verification operators for two widely used types of QEC codes: stabilizer codes and QLDPC codes. The verification process can be efficiently realized with $O(\frac{m\ln{1/\delta}}{\epsilon})$ and $O(\frac{m^2\ln{1/\delta}}{\epsilon^2})$ local measurements for generic stabilizer codes and QLDPC codes, respectively. Remarkably, for specific codes such as CSS codes and QLDPC stabilizer codes, the number of required local measurements does not depend on $m$. These developments enable efficient and effective benchmarking of QEC codes in practical fault-tolerant quantum computing. Besides, we introduce a new composite protocol for fidelity estimation of QEC code states that integrates quantum subspace verification with direct fidelity estimation. This hybrid approach enables efficient fidelity estimation between $\rho$ and any QEC code state $\ket{\psi}$.

Subsequent efforts may focus on extending our results to encompass a broader array of QEC codes. A promising direction is to adapt our framework for continuous-variable quantum systems, specifically to verify QEC codes such as Gottesman-Kitaev-Preskill (GKP) codes \cite{Gottesman2001GKPcode,Liu2021CV} within these systems. Additionally, exploring the integration of our subspace verification approach with approximate QEC codes \cite{Leung1997Approximate, Hayden2020Approximate} presents an intriguing possibility. These codes are pivotal in fault-tolerant quantum computing, benefiting from their capacity to support universal transversal gates. Furthermore, it remains crucial to determine whether almost all QEC subspaces that exhibit good performance can be efficiently verified using local measurements, as demonstrated for quantum state verification \cite{Huang2024Certifying}.

To broaden the applicability of our framework, it is also crucial to extend it to adversarial scenarios, which would relax the i.i.d. assumption of the input states. Techniques employed in the framework of quantum state verification might be used to address the problem \cite{Zhu2019Efficient}. In addition, the extension to the quantum process is also interesting \cite{liu2019efficient,zhu2019efficientgate,Zeng2020gate}. Finally, the current work mainly discusses the combination of quantum subspace verification and direct fidelity estimation. It is thus intriguing to explore the further application of quantum subspace verification on general quantum state learning problems \cite{Anshu2024Survey,gebhart2023learning}.

While finalizing our manuscript, we note that a contemporaneous work also introduces the concept of subspace verification \cite{zheng2024efficient}. Compared with theirs, our framework utilizes a more general hypothesis testing scheme better suited for noisy physical states, rather than merely extending the existing quantum state verification frameworks to include subspace. Additionally, our research focuses on a broader range of stabilizer codes, including generic stabilizer and QLDPC codes, rather than specific codes. Furthermore, we introduce a protocol that leverages quantum subspace verification for quantum state learning, illustrating a practical application of our theoretical advancements.

\section*{Acknowledgments}
We thank Guoding Liu, Zihao Li, Zhenhuan Liu, Yuxuan Yan, Huangjun Zhu, Xingjian Zhang for their useful discussions. J.~C and X.~M acknowledge the support of the National Natural Science Foundation of China Grant No.~12174216 and the Innovation Program for Quantum Science and Technology Grant No.~2021ZD0300804 and No.~2021ZD0300702.
Q.~Z acknowledges the support of the HKU Seed Fund for Basic
Research for New Staff via Project 2201100596, Guangdong Natural Science Fund via Project 2023A1515012185, National Natural Science Foundation of China (NSFC) via Project No. 12305030 and No. 12347104, Hong Kong Research Grant Council (RGC) via No. 27300823, N\_HKU718/23, and R6010-23, Guangdong Provincial Quantum Science Strategic Initiative GDZX2200001.
Y.~Z acknowledges the support of National Natural Science Foundation of China (NSFC) Grant No.12205048, Innovation Program for Quantum Science and Technology 2021ZD0302000, the start-up funding of Fudan University, and the CPS-Huawei MindSpore Fellowship.


\bibliography{BibSubspace}

\begin{thebibliography}{46}%
\makeatletter
\providecommand \@ifxundefined [1]{%
 \@ifx{#1\undefined}
}%
\providecommand \@ifnum [1]{%
 \ifnum #1\expandafter \@firstoftwo
 \else \expandafter \@secondoftwo
 \fi
}%
\providecommand \@ifx [1]{%
 \ifx #1\expandafter \@firstoftwo
 \else \expandafter \@secondoftwo
 \fi
}%
\providecommand \natexlab [1]{#1}%
\providecommand \enquote  [1]{``#1''}%
\providecommand \bibnamefont  [1]{#1}%
\providecommand \bibfnamefont [1]{#1}%
\providecommand \citenamefont [1]{#1}%
\providecommand \href@noop [0]{\@secondoftwo}%
\providecommand \href [0]{\begingroup \@sanitize@url \@href}%
\providecommand \@href[1]{\@@startlink{#1}\@@href}%
\providecommand \@@href[1]{\endgroup#1\@@endlink}%
\providecommand \@sanitize@url [0]{\catcode `\\12\catcode `\$12\catcode
  `\&12\catcode `\#12\catcode `\^12\catcode `\_12\catcode `\%12\relax}%
\providecommand \@@startlink[1]{}%
\providecommand \@@endlink[0]{}%
\providecommand \url  [0]{\begingroup\@sanitize@url \@url }%
\providecommand \@url [1]{\endgroup\@href {#1}{\urlprefix }}%
\providecommand \urlprefix  [0]{URL }%
\providecommand \Eprint [0]{\href }%
\providecommand \doibase [0]{https://doi.org/}%
\providecommand \selectlanguage [0]{\@gobble}%
\providecommand \bibinfo  [0]{\@secondoftwo}%
\providecommand \bibfield  [0]{\@secondoftwo}%
\providecommand \translation [1]{[#1]}%
\providecommand \BibitemOpen [0]{}%
\providecommand \bibitemStop [0]{}%
\providecommand \bibitemNoStop [0]{.\EOS\space}%
\providecommand \EOS [0]{\spacefactor3000\relax}%
\providecommand \BibitemShut  [1]{\csname bibitem#1\endcsname}%
\let\auto@bib@innerbib\@empty
\bibitem [{\citenamefont {Kitaev}(2003{\natexlab{a}})}]{Kitaev2003Toric}%
  \BibitemOpen
  \bibfield  {author} {\bibinfo {author} {\bibfnamefont {A.}~\bibnamefont
  {Kitaev}},\ }\bibfield  {title} {\bibinfo {title} {Fault-tolerant quantum
  computation by anyons},\ }\href
  {https://doi.org/https://doi.org/10.1016/S0003-4916(02)00018-0} {\bibfield
  {journal} {\bibinfo  {journal} {Annals of Physics}\ }\textbf {\bibinfo
  {volume} {303}},\ \bibinfo {pages} {2} (\bibinfo {year}
  {2003}{\natexlab{a}})}\BibitemShut {NoStop}%
\bibitem [{\citenamefont {Bravyi}\ and\ \citenamefont
  {Kitaev}(1998)}]{Bravyi1998Quantum}%
  \BibitemOpen
  \bibfield  {author} {\bibinfo {author} {\bibfnamefont {S.~B.}\ \bibnamefont
  {Bravyi}}\ and\ \bibinfo {author} {\bibfnamefont {A.~Y.}\ \bibnamefont
  {Kitaev}},\ }\href@noop {} {\bibinfo {title} {Quantum codes on a lattice with
  boundary}} (\bibinfo {year} {1998}),\ \Eprint
  {https://arxiv.org/abs/quant-ph/9811052} {arXiv:quant-ph/9811052}
  \BibitemShut {NoStop}%
\bibitem [{\citenamefont {Acharya}\ \emph {et~al.}(2024)\citenamefont
  {Acharya}, \citenamefont {Aghababaie-Beni}, \citenamefont {Aleiner},
  \citenamefont {Andersen}, \citenamefont {Ansmann}, \citenamefont {Arute},
  \citenamefont {Arya}, \citenamefont {Asfaw}, \citenamefont {Astrakhantsev},
  \citenamefont {Atalaya} \emph {et~al.}}]{Google2024surfacecodethreshold}%
  \BibitemOpen
  \bibfield  {author} {\bibinfo {author} {\bibfnamefont {R.}~\bibnamefont
  {Acharya}}, \bibinfo {author} {\bibfnamefont {L.}~\bibnamefont
  {Aghababaie-Beni}}, \bibinfo {author} {\bibfnamefont {I.}~\bibnamefont
  {Aleiner}}, \bibinfo {author} {\bibfnamefont {T.~I.}\ \bibnamefont
  {Andersen}}, \bibinfo {author} {\bibfnamefont {M.}~\bibnamefont {Ansmann}},
  \bibinfo {author} {\bibfnamefont {F.}~\bibnamefont {Arute}}, \bibinfo
  {author} {\bibfnamefont {K.}~\bibnamefont {Arya}}, \bibinfo {author}
  {\bibfnamefont {A.}~\bibnamefont {Asfaw}}, \bibinfo {author} {\bibfnamefont
  {N.}~\bibnamefont {Astrakhantsev}}, \bibinfo {author} {\bibfnamefont
  {J.}~\bibnamefont {Atalaya}}, \emph {et~al.},\ }\href@noop {} {\bibinfo
  {title} {Quantum error correction below the surface code threshold}}
  (\bibinfo {year} {2024}),\ \Eprint {https://arxiv.org/abs/2408.13687}
  {arXiv:2408.13687 [quant-ph]} \BibitemShut {NoStop}%
\bibitem [{\citenamefont {Gottesman}(2014)}]{Gottesman2014Faulttolerant}%
  \BibitemOpen
  \bibfield  {author} {\bibinfo {author} {\bibfnamefont {D.}~\bibnamefont
  {Gottesman}},\ }\href@noop {} {\bibinfo {title} {Fault-tolerant quantum
  computation with constant overhead}} (\bibinfo {year} {2014}),\ \Eprint
  {https://arxiv.org/abs/1310.2984} {arXiv:1310.2984} \BibitemShut {NoStop}%
\bibitem [{\citenamefont {Panteleev}\ and\ \citenamefont
  {Kalachev}(2022)}]{Panteleev2022GoodLDPC}%
  \BibitemOpen
  \bibfield  {author} {\bibinfo {author} {\bibfnamefont {P.}~\bibnamefont
  {Panteleev}}\ and\ \bibinfo {author} {\bibfnamefont {G.}~\bibnamefont
  {Kalachev}},\ }\bibfield  {title} {\bibinfo {title} {Asymptotically good
  quantum and locally testable classical ldpc codes},\ }in\ \href
  {https://doi.org/10.1145/3519935.3520017} {\emph {\bibinfo {booktitle}
  {Proceedings of the 54th Annual ACM SIGACT Symposium on Theory of
  Computing}}},\ \bibinfo {series and number} {STOC 2022}\ (\bibinfo
  {publisher} {Association for Computing Machinery},\ \bibinfo {address} {New
  York, NY, USA},\ \bibinfo {year} {2022})\ p.\ \bibinfo {pages}
  {375–388}\BibitemShut {NoStop}%
\bibitem [{\citenamefont {Xu}\ \emph {et~al.}(2024)\citenamefont {Xu},
  \citenamefont {Bonilla~Ataides}, \citenamefont {Pattison}, \citenamefont
  {Raveendran}, \citenamefont {Bluvstein}, \citenamefont {Wurtz}, \citenamefont
  {Vasi{\'c}}, \citenamefont {Lukin}, \citenamefont {Jiang},\ and\
  \citenamefont {Zhou}}]{xu2024constant}%
  \BibitemOpen
  \bibfield  {author} {\bibinfo {author} {\bibfnamefont {Q.}~\bibnamefont
  {Xu}}, \bibinfo {author} {\bibfnamefont {J.~P.}\ \bibnamefont
  {Bonilla~Ataides}}, \bibinfo {author} {\bibfnamefont {C.~A.}\ \bibnamefont
  {Pattison}}, \bibinfo {author} {\bibfnamefont {N.}~\bibnamefont
  {Raveendran}}, \bibinfo {author} {\bibfnamefont {D.}~\bibnamefont
  {Bluvstein}}, \bibinfo {author} {\bibfnamefont {J.}~\bibnamefont {Wurtz}},
  \bibinfo {author} {\bibfnamefont {B.}~\bibnamefont {Vasi{\'c}}}, \bibinfo
  {author} {\bibfnamefont {M.~D.}\ \bibnamefont {Lukin}}, \bibinfo {author}
  {\bibfnamefont {L.}~\bibnamefont {Jiang}},\ and\ \bibinfo {author}
  {\bibfnamefont {H.}~\bibnamefont {Zhou}},\ }\bibfield  {title} {\bibinfo
  {title} {Constant-overhead fault-tolerant quantum computation with
  reconfigurable atom arrays},\ }\href
  {https://www.nature.com/articles/s41567-024-02479-z} {\bibfield  {journal}
  {\bibinfo  {journal} {Nature Physics}\ ,\ \bibinfo {pages} {1}} (\bibinfo
  {year} {2024})}\BibitemShut {NoStop}%
\bibitem [{\citenamefont {Gullans}\ \emph {et~al.}(2021)\citenamefont
  {Gullans}, \citenamefont {Krastanov}, \citenamefont {Huse}, \citenamefont
  {Jiang},\ and\ \citenamefont {Flammia}}]{Gullans2021RandomQEC}%
  \BibitemOpen
  \bibfield  {author} {\bibinfo {author} {\bibfnamefont {M.~J.}\ \bibnamefont
  {Gullans}}, \bibinfo {author} {\bibfnamefont {S.}~\bibnamefont {Krastanov}},
  \bibinfo {author} {\bibfnamefont {D.~A.}\ \bibnamefont {Huse}}, \bibinfo
  {author} {\bibfnamefont {L.}~\bibnamefont {Jiang}},\ and\ \bibinfo {author}
  {\bibfnamefont {S.~T.}\ \bibnamefont {Flammia}},\ }\bibfield  {title}
  {\bibinfo {title} {Quantum coding with low-depth random circuits},\ }\href
  {https://doi.org/10.1103/PhysRevX.11.031066} {\bibfield  {journal} {\bibinfo
  {journal} {Phys. Rev. X}\ }\textbf {\bibinfo {volume} {11}},\ \bibinfo
  {pages} {031066} (\bibinfo {year} {2021})}\BibitemShut {NoStop}%
\bibitem [{\citenamefont {Finsterhoelzl}\ and\ \citenamefont
  {Burkard}(2022)}]{Finsterhoelzl2023benchmarkingQEC}%
  \BibitemOpen
  \bibfield  {author} {\bibinfo {author} {\bibfnamefont {R.}~\bibnamefont
  {Finsterhoelzl}}\ and\ \bibinfo {author} {\bibfnamefont {G.}~\bibnamefont
  {Burkard}},\ }\bibfield  {title} {\bibinfo {title} {Benchmarking quantum
  error-correcting codes on quasi-linear and central-spin processors},\ }\href
  {https://doi.org/10.1088/2058-9565/aca21f} {\bibfield  {journal} {\bibinfo
  {journal} {Quantum Science and Technology}\ }\textbf {\bibinfo {volume}
  {8}},\ \bibinfo {pages} {015013} (\bibinfo {year} {2022})}\BibitemShut
  {NoStop}%
\bibitem [{\citenamefont {Cramer}\ \emph {et~al.}(2010)\citenamefont {Cramer},
  \citenamefont {Plenio}, \citenamefont {Flammia}, \citenamefont {Somma},
  \citenamefont {Gross}, \citenamefont {Bartlett}, \citenamefont
  {Landon-Cardinal}, \citenamefont {Poulin},\ and\ \citenamefont
  {Liu}}]{cramer2010efficient}%
  \BibitemOpen
  \bibfield  {author} {\bibinfo {author} {\bibfnamefont {M.}~\bibnamefont
  {Cramer}}, \bibinfo {author} {\bibfnamefont {M.~B.}\ \bibnamefont {Plenio}},
  \bibinfo {author} {\bibfnamefont {S.~T.}\ \bibnamefont {Flammia}}, \bibinfo
  {author} {\bibfnamefont {R.}~\bibnamefont {Somma}}, \bibinfo {author}
  {\bibfnamefont {D.}~\bibnamefont {Gross}}, \bibinfo {author} {\bibfnamefont
  {S.~D.}\ \bibnamefont {Bartlett}}, \bibinfo {author} {\bibfnamefont
  {O.}~\bibnamefont {Landon-Cardinal}}, \bibinfo {author} {\bibfnamefont
  {D.}~\bibnamefont {Poulin}},\ and\ \bibinfo {author} {\bibfnamefont {Y.-K.}\
  \bibnamefont {Liu}},\ }\bibfield  {title} {\bibinfo {title} {Efficient
  quantum state tomography},\ }\href
  {https://www.nature.com/articles/ncomms1147} {\bibfield  {journal} {\bibinfo
  {journal} {Nature communications}\ }\textbf {\bibinfo {volume} {1}},\
  \bibinfo {pages} {149} (\bibinfo {year} {2010})}\BibitemShut {NoStop}%
\bibitem [{\citenamefont {Gross}\ \emph {et~al.}(2010)\citenamefont {Gross},
  \citenamefont {Liu}, \citenamefont {Flammia}, \citenamefont {Becker},\ and\
  \citenamefont {Eisert}}]{gross2010quantum}%
  \BibitemOpen
  \bibfield  {author} {\bibinfo {author} {\bibfnamefont {D.}~\bibnamefont
  {Gross}}, \bibinfo {author} {\bibfnamefont {Y.-K.}\ \bibnamefont {Liu}},
  \bibinfo {author} {\bibfnamefont {S.~T.}\ \bibnamefont {Flammia}}, \bibinfo
  {author} {\bibfnamefont {S.}~\bibnamefont {Becker}},\ and\ \bibinfo {author}
  {\bibfnamefont {J.}~\bibnamefont {Eisert}},\ }\bibfield  {title} {\bibinfo
  {title} {Quantum state tomography via compressed sensing},\ }\href
  {https://doi.org/10.1103/PhysRevLett.105.150401} {\bibfield  {journal}
  {\bibinfo  {journal} {Phys. Rev. Lett.}\ }\textbf {\bibinfo {volume} {105}},\
  \bibinfo {pages} {150401} (\bibinfo {year} {2010})}\BibitemShut {NoStop}%
\bibitem [{\citenamefont {Lvovsky}\ and\ \citenamefont
  {Raymer}(2009)}]{Lvovsky2009CVtomoRMP}%
  \BibitemOpen
  \bibfield  {author} {\bibinfo {author} {\bibfnamefont {A.~I.}\ \bibnamefont
  {Lvovsky}}\ and\ \bibinfo {author} {\bibfnamefont {M.~G.}\ \bibnamefont
  {Raymer}},\ }\bibfield  {title} {\bibinfo {title} {Continuous-variable
  optical quantum-state tomography},\ }\href
  {https://doi.org/10.1103/RevModPhys.81.299} {\bibfield  {journal} {\bibinfo
  {journal} {Rev. Mod. Phys.}\ }\textbf {\bibinfo {volume} {81}},\ \bibinfo
  {pages} {299} (\bibinfo {year} {2009})}\BibitemShut {NoStop}%
\bibitem [{\citenamefont {Huang}\ \emph {et~al.}(2020)\citenamefont {Huang},
  \citenamefont {Kueng},\ and\ \citenamefont {Preskill}}]{Huang2020Predicting}%
  \BibitemOpen
  \bibfield  {author} {\bibinfo {author} {\bibfnamefont {H.-Y.}\ \bibnamefont
  {Huang}}, \bibinfo {author} {\bibfnamefont {R.}~\bibnamefont {Kueng}},\ and\
  \bibinfo {author} {\bibfnamefont {J.}~\bibnamefont {Preskill}},\ }\bibfield
  {title} {\bibinfo {title} {Predicting many properties of a quantum system
  from very few measurements},\ }\href
  {https://doi.org/10.1038/s41567-020-0932-7} {\bibfield  {journal} {\bibinfo
  {journal} {Nature Physics}\ }\textbf {\bibinfo {volume} {16}},\ \bibinfo
  {pages} {1050} (\bibinfo {year} {2020})}\BibitemShut {NoStop}%
\bibitem [{\citenamefont {Chen}\ \emph {et~al.}(2021)\citenamefont {Chen},
  \citenamefont {Yu}, \citenamefont {Zeng},\ and\ \citenamefont
  {Flammia}}]{Senrui2021Robust}%
  \BibitemOpen
  \bibfield  {author} {\bibinfo {author} {\bibfnamefont {S.}~\bibnamefont
  {Chen}}, \bibinfo {author} {\bibfnamefont {W.}~\bibnamefont {Yu}}, \bibinfo
  {author} {\bibfnamefont {P.}~\bibnamefont {Zeng}},\ and\ \bibinfo {author}
  {\bibfnamefont {S.~T.}\ \bibnamefont {Flammia}},\ }\bibfield  {title}
  {\bibinfo {title} {Robust shadow estimation},\ }\href
  {https://doi.org/10.1103/PRXQuantum.2.030348} {\bibfield  {journal} {\bibinfo
   {journal} {PRX Quantum}\ }\textbf {\bibinfo {volume} {2}},\ \bibinfo {pages}
  {030348} (\bibinfo {year} {2021})}\BibitemShut {NoStop}%
\bibitem [{\citenamefont {Zhou}\ and\ \citenamefont
  {Liu}(2024)}]{zhou2024hybrid}%
  \BibitemOpen
  \bibfield  {author} {\bibinfo {author} {\bibfnamefont {Y.}~\bibnamefont
  {Zhou}}\ and\ \bibinfo {author} {\bibfnamefont {Z.}~\bibnamefont {Liu}},\
  }\bibfield  {title} {\bibinfo {title} {A hybrid framework for estimating
  nonlinear functions of quantum states},\ }\href
  {https://www.nature.com/articles/s41534-024-00846-5} {\bibfield  {journal}
  {\bibinfo  {journal} {npj Quantum Inf.}\ }\textbf {\bibinfo {volume} {10}},\
  \bibinfo {pages} {62} (\bibinfo {year} {2024})}\BibitemShut {NoStop}%
\bibitem [{\citenamefont {Flammia}\ and\ \citenamefont
  {Liu}(2011)}]{Flammia2011Direct}%
  \BibitemOpen
  \bibfield  {author} {\bibinfo {author} {\bibfnamefont {S.~T.}\ \bibnamefont
  {Flammia}}\ and\ \bibinfo {author} {\bibfnamefont {Y.-K.}\ \bibnamefont
  {Liu}},\ }\bibfield  {title} {\bibinfo {title} {Direct fidelity estimation
  from few pauli measurements},\ }\href
  {https://doi.org/10.1103/PhysRevLett.106.230501} {\bibfield  {journal}
  {\bibinfo  {journal} {Phys. Rev. Lett.}\ }\textbf {\bibinfo {volume} {106}},\
  \bibinfo {pages} {230501} (\bibinfo {year} {2011})}\BibitemShut {NoStop}%
\bibitem [{\citenamefont {G\"uhne}\ \emph {et~al.}(2007)\citenamefont
  {G\"uhne}, \citenamefont {Lu}, \citenamefont {Gao},\ and\ \citenamefont
  {Pan}}]{Guhne2007toolbox}%
  \BibitemOpen
  \bibfield  {author} {\bibinfo {author} {\bibfnamefont {O.}~\bibnamefont
  {G\"uhne}}, \bibinfo {author} {\bibfnamefont {C.-Y.}\ \bibnamefont {Lu}},
  \bibinfo {author} {\bibfnamefont {W.-B.}\ \bibnamefont {Gao}},\ and\ \bibinfo
  {author} {\bibfnamefont {J.-W.}\ \bibnamefont {Pan}},\ }\bibfield  {title}
  {\bibinfo {title} {Toolbox for entanglement detection and fidelity
  estimation},\ }\href {https://doi.org/10.1103/PhysRevA.76.030305} {\bibfield
  {journal} {\bibinfo  {journal} {Phys. Rev. A}\ }\textbf {\bibinfo {volume}
  {76}},\ \bibinfo {pages} {030305} (\bibinfo {year} {2007})}\BibitemShut
  {NoStop}%
\bibitem [{\citenamefont {Seshadri}\ \emph {et~al.}(2024)\citenamefont
  {Seshadri}, \citenamefont {Ringbauer}, \citenamefont {Spainhour},
  \citenamefont {Blatt}, \citenamefont {Monz},\ and\ \citenamefont
  {Becker}}]{Seshadri2024versatile}%
  \BibitemOpen
  \bibfield  {author} {\bibinfo {author} {\bibfnamefont {A.}~\bibnamefont
  {Seshadri}}, \bibinfo {author} {\bibfnamefont {M.}~\bibnamefont {Ringbauer}},
  \bibinfo {author} {\bibfnamefont {J.}~\bibnamefont {Spainhour}}, \bibinfo
  {author} {\bibfnamefont {R.}~\bibnamefont {Blatt}}, \bibinfo {author}
  {\bibfnamefont {T.}~\bibnamefont {Monz}},\ and\ \bibinfo {author}
  {\bibfnamefont {S.}~\bibnamefont {Becker}},\ }\bibfield  {title} {\bibinfo
  {title} {Versatile fidelity estimation with confidence},\ }\href
  {https://doi.org/10.1103/PhysRevLett.133.020402} {\bibfield  {journal}
  {\bibinfo  {journal} {Phys. Rev. Lett.}\ }\textbf {\bibinfo {volume} {133}},\
  \bibinfo {pages} {020402} (\bibinfo {year} {2024})}\BibitemShut {NoStop}%
\bibitem [{\citenamefont {Pallister}\ \emph {et~al.}(2018)\citenamefont
  {Pallister}, \citenamefont {Linden},\ and\ \citenamefont
  {Montanaro}}]{Sam2018Optimal}%
  \BibitemOpen
  \bibfield  {author} {\bibinfo {author} {\bibfnamefont {S.}~\bibnamefont
  {Pallister}}, \bibinfo {author} {\bibfnamefont {N.}~\bibnamefont {Linden}},\
  and\ \bibinfo {author} {\bibfnamefont {A.}~\bibnamefont {Montanaro}},\
  }\bibfield  {title} {\bibinfo {title} {Optimal verification of entangled
  states with local measurements},\ }\href
  {https://doi.org/10.1103/PhysRevLett.120.170502} {\bibfield  {journal}
  {\bibinfo  {journal} {Phys. Rev. Lett.}\ }\textbf {\bibinfo {volume} {120}},\
  \bibinfo {pages} {170502} (\bibinfo {year} {2018})}\BibitemShut {NoStop}%
\bibitem [{\citenamefont {Hayashi}\ \emph {et~al.}(2006)\citenamefont
  {Hayashi}, \citenamefont {Matsumoto},\ and\ \citenamefont
  {Tsuda}}]{hayashi2006study}%
  \BibitemOpen
  \bibfield  {author} {\bibinfo {author} {\bibfnamefont {M.}~\bibnamefont
  {Hayashi}}, \bibinfo {author} {\bibfnamefont {K.}~\bibnamefont {Matsumoto}},\
  and\ \bibinfo {author} {\bibfnamefont {Y.}~\bibnamefont {Tsuda}},\ }\bibfield
   {title} {\bibinfo {title} {A study of locc-detection of a maximally
  entangled state using hypothesis testing},\ }\href
  {https://iopscience.iop.org/article/10.1088/0305-4470/39/46/013/meta}
  {\bibfield  {journal} {\bibinfo  {journal} {Journal of Physics A:
  Mathematical and General}\ }\textbf {\bibinfo {volume} {39}},\ \bibinfo
  {pages} {14427} (\bibinfo {year} {2006})}\BibitemShut {NoStop}%
\bibitem [{\citenamefont {Zhu}\ and\ \citenamefont
  {Hayashi}(2019{\natexlab{a}})}]{Zhu2019Efficient}%
  \BibitemOpen
  \bibfield  {author} {\bibinfo {author} {\bibfnamefont {H.}~\bibnamefont
  {Zhu}}\ and\ \bibinfo {author} {\bibfnamefont {M.}~\bibnamefont {Hayashi}},\
  }\bibfield  {title} {\bibinfo {title} {Efficient verification of pure quantum
  states in the adversarial scenario},\ }\href
  {https://doi.org/10.1103/PhysRevLett.123.260504} {\bibfield  {journal}
  {\bibinfo  {journal} {Phys. Rev. Lett.}\ }\textbf {\bibinfo {volume} {123}},\
  \bibinfo {pages} {260504} (\bibinfo {year} {2019}{\natexlab{a}})}\BibitemShut
  {NoStop}%
\bibitem [{\citenamefont {Zhu}\ and\ \citenamefont
  {Hayashi}(2019{\natexlab{b}})}]{Zhu2019General}%
  \BibitemOpen
  \bibfield  {author} {\bibinfo {author} {\bibfnamefont {H.}~\bibnamefont
  {Zhu}}\ and\ \bibinfo {author} {\bibfnamefont {M.}~\bibnamefont {Hayashi}},\
  }\bibfield  {title} {\bibinfo {title} {General framework for verifying pure
  quantum states in the adversarial scenario},\ }\href
  {https://doi.org/10.1103/PhysRevA.100.062335} {\bibfield  {journal} {\bibinfo
   {journal} {Phys. Rev. A}\ }\textbf {\bibinfo {volume} {100}},\ \bibinfo
  {pages} {062335} (\bibinfo {year} {2019}{\natexlab{b}})}\BibitemShut
  {NoStop}%
\bibitem [{\citenamefont {Yu}\ \emph {et~al.}(2019)\citenamefont {Yu},
  \citenamefont {Shang},\ and\ \citenamefont {G{\"u}hne}}]{yu2019optimal}%
  \BibitemOpen
  \bibfield  {author} {\bibinfo {author} {\bibfnamefont {X.-D.}\ \bibnamefont
  {Yu}}, \bibinfo {author} {\bibfnamefont {J.}~\bibnamefont {Shang}},\ and\
  \bibinfo {author} {\bibfnamefont {O.}~\bibnamefont {G{\"u}hne}},\ }\bibfield
  {title} {\bibinfo {title} {Optimal verification of general bipartite pure
  states},\ }\href {https://www.nature.com/articles/s41534-019-0226-z}
  {\bibfield  {journal} {\bibinfo  {journal} {npj Quantum Information}\
  }\textbf {\bibinfo {volume} {5}},\ \bibinfo {pages} {1} (\bibinfo {year}
  {2019})}\BibitemShut {NoStop}%
\bibitem [{\citenamefont {Liu}\ \emph {et~al.}(2019)\citenamefont {Liu},
  \citenamefont {Yu}, \citenamefont {Shang}, \citenamefont {Zhu},\ and\
  \citenamefont {Zhang}}]{Liu2019Dicke}%
  \BibitemOpen
  \bibfield  {author} {\bibinfo {author} {\bibfnamefont {Y.-C.}\ \bibnamefont
  {Liu}}, \bibinfo {author} {\bibfnamefont {X.-D.}\ \bibnamefont {Yu}},
  \bibinfo {author} {\bibfnamefont {J.}~\bibnamefont {Shang}}, \bibinfo
  {author} {\bibfnamefont {H.}~\bibnamefont {Zhu}},\ and\ \bibinfo {author}
  {\bibfnamefont {X.}~\bibnamefont {Zhang}},\ }\bibfield  {title} {\bibinfo
  {title} {Efficient verification of dicke states},\ }\href
  {https://doi.org/10.1103/PhysRevApplied.12.044020} {\bibfield  {journal}
  {\bibinfo  {journal} {Phys. Rev. Applied}\ }\textbf {\bibinfo {volume}
  {12}},\ \bibinfo {pages} {044020} (\bibinfo {year} {2019})}\BibitemShut
  {NoStop}%
\bibitem [{\citenamefont {Liu}\ \emph {et~al.}(2021{\natexlab{a}})\citenamefont
  {Liu}, \citenamefont {Shang}, \citenamefont {Han},\ and\ \citenamefont
  {Zhang}}]{Liu2021Nondemolition}%
  \BibitemOpen
  \bibfield  {author} {\bibinfo {author} {\bibfnamefont {Y.-C.}\ \bibnamefont
  {Liu}}, \bibinfo {author} {\bibfnamefont {J.}~\bibnamefont {Shang}}, \bibinfo
  {author} {\bibfnamefont {R.}~\bibnamefont {Han}},\ and\ \bibinfo {author}
  {\bibfnamefont {X.}~\bibnamefont {Zhang}},\ }\bibfield  {title} {\bibinfo
  {title} {Universally optimal verification of entangled states with
  nondemolition measurements},\ }\href
  {https://doi.org/10.1103/PhysRevLett.126.090504} {\bibfield  {journal}
  {\bibinfo  {journal} {Phys. Rev. Lett.}\ }\textbf {\bibinfo {volume} {126}},\
  \bibinfo {pages} {090504} (\bibinfo {year} {2021}{\natexlab{a}})}\BibitemShut
  {NoStop}%
\bibitem [{\citenamefont {Dangniam}\ \emph {et~al.}(2020)\citenamefont
  {Dangniam}, \citenamefont {Han},\ and\ \citenamefont
  {Zhu}}]{Dangniam2020Stabilizer}%
  \BibitemOpen
  \bibfield  {author} {\bibinfo {author} {\bibfnamefont {N.}~\bibnamefont
  {Dangniam}}, \bibinfo {author} {\bibfnamefont {Y.-G.}\ \bibnamefont {Han}},\
  and\ \bibinfo {author} {\bibfnamefont {H.}~\bibnamefont {Zhu}},\ }\bibfield
  {title} {\bibinfo {title} {Optimal verification of stabilizer states},\
  }\href {https://doi.org/10.1103/PhysRevResearch.2.043323} {\bibfield
  {journal} {\bibinfo  {journal} {Phys. Rev. Research}\ }\textbf {\bibinfo
  {volume} {2}},\ \bibinfo {pages} {043323} (\bibinfo {year}
  {2020})}\BibitemShut {NoStop}%
\bibitem [{\citenamefont {Huang}\ \emph {et~al.}(2024)\citenamefont {Huang},
  \citenamefont {Preskill},\ and\ \citenamefont
  {Soleimanifar}}]{Huang2024Certifying}%
  \BibitemOpen
  \bibfield  {author} {\bibinfo {author} {\bibfnamefont {H.-Y.}\ \bibnamefont
  {Huang}}, \bibinfo {author} {\bibfnamefont {J.}~\bibnamefont {Preskill}},\
  and\ \bibinfo {author} {\bibfnamefont {M.}~\bibnamefont {Soleimanifar}},\
  }\href@noop {} {\bibinfo {title} {Certifying almost all quantum states with
  few single-qubit measurements}} (\bibinfo {year} {2024}),\ \Eprint
  {https://arxiv.org/abs/2404.07281} {arXiv:2404.07281} \BibitemShut {NoStop}%
\bibitem [{\citenamefont {Zhu}\ and\ \citenamefont
  {Hayashi}(2019{\natexlab{c}})}]{Zhu2019Hypergraph}%
  \BibitemOpen
  \bibfield  {author} {\bibinfo {author} {\bibfnamefont {H.}~\bibnamefont
  {Zhu}}\ and\ \bibinfo {author} {\bibfnamefont {M.}~\bibnamefont {Hayashi}},\
  }\bibfield  {title} {\bibinfo {title} {Efficient verification of hypergraph
  states},\ }\href {https://doi.org/10.1103/PhysRevApplied.12.054047}
  {\bibfield  {journal} {\bibinfo  {journal} {Phys. Rev. Applied}\ }\textbf
  {\bibinfo {volume} {12}},\ \bibinfo {pages} {054047} (\bibinfo {year}
  {2019}{\natexlab{c}})}\BibitemShut {NoStop}%
\bibitem [{\citenamefont {Calderbank}\ and\ \citenamefont
  {Shor}(1996)}]{Calderbank1996CSScode}%
  \BibitemOpen
  \bibfield  {author} {\bibinfo {author} {\bibfnamefont {A.~R.}\ \bibnamefont
  {Calderbank}}\ and\ \bibinfo {author} {\bibfnamefont {P.~W.}\ \bibnamefont
  {Shor}},\ }\bibfield  {title} {\bibinfo {title} {Good quantum
  error-correcting codes exist},\ }\href
  {https://doi.org/10.1103/PhysRevA.54.1098} {\bibfield  {journal} {\bibinfo
  {journal} {Phys. Rev. A}\ }\textbf {\bibinfo {volume} {54}},\ \bibinfo
  {pages} {1098} (\bibinfo {year} {1996})}\BibitemShut {NoStop}%
\bibitem [{\citenamefont {Steane}(1996)}]{Steane1996CSScode}%
  \BibitemOpen
  \bibfield  {author} {\bibinfo {author} {\bibfnamefont {A.}~\bibnamefont
  {Steane}},\ }\bibfield  {title} {\bibinfo {title} {Multiple-particle
  interference and quantum error correction},\ }\href
  {https://royalsocietypublishing.org/doi/10.1098/rspa.1996.0136} {\bibfield
  {journal} {\bibinfo  {journal} {Proceedings of the Royal Society of London.
  Series A: Mathematical, Physical and Engineering Sciences}\ }\textbf
  {\bibinfo {volume} {452}},\ \bibinfo {pages} {2551} (\bibinfo {year}
  {1996})}\BibitemShut {NoStop}%
\bibitem [{\citenamefont {Anshu}\ and\ \citenamefont
  {Arunachalam}(2024)}]{Anshu2024Survey}%
  \BibitemOpen
  \bibfield  {author} {\bibinfo {author} {\bibfnamefont {A.}~\bibnamefont
  {Anshu}}\ and\ \bibinfo {author} {\bibfnamefont {S.}~\bibnamefont
  {Arunachalam}},\ }\bibfield  {title} {\bibinfo {title} {A survey on the
  complexity of learning quantum states},\ }\href
  {https://www.nature.com/articles/s42254-023-00662-4} {\bibfield  {journal}
  {\bibinfo  {journal} {Nature Reviews Physics}\ }\textbf {\bibinfo {volume}
  {6}},\ \bibinfo {pages} {59} (\bibinfo {year} {2024})}\BibitemShut {NoStop}%
\bibitem [{\citenamefont {T\'oth}\ and\ \citenamefont
  {G\"uhne}(2005{\natexlab{a}})}]{Toth2005Detecting}%
  \BibitemOpen
  \bibfield  {author} {\bibinfo {author} {\bibfnamefont {G.}~\bibnamefont
  {T\'oth}}\ and\ \bibinfo {author} {\bibfnamefont {O.}~\bibnamefont
  {G\"uhne}},\ }\bibfield  {title} {\bibinfo {title} {Detecting genuine
  multipartite entanglement with two local measurements},\ }\href
  {https://doi.org/10.1103/PhysRevLett.94.060501} {\bibfield  {journal}
  {\bibinfo  {journal} {Phys. Rev. Lett.}\ }\textbf {\bibinfo {volume} {94}},\
  \bibinfo {pages} {060501} (\bibinfo {year} {2005}{\natexlab{a}})}\BibitemShut
  {NoStop}%
\bibitem [{\citenamefont {T\'oth}\ and\ \citenamefont
  {G\"uhne}(2005{\natexlab{b}})}]{Toth2005stabilizer}%
  \BibitemOpen
  \bibfield  {author} {\bibinfo {author} {\bibfnamefont {G.}~\bibnamefont
  {T\'oth}}\ and\ \bibinfo {author} {\bibfnamefont {O.}~\bibnamefont
  {G\"uhne}},\ }\bibfield  {title} {\bibinfo {title} {Entanglement detection in
  the stabilizer formalism},\ }\href
  {https://doi.org/10.1103/PhysRevA.72.022340} {\bibfield  {journal} {\bibinfo
  {journal} {Phys. Rev. A}\ }\textbf {\bibinfo {volume} {72}},\ \bibinfo
  {pages} {022340} (\bibinfo {year} {2005}{\natexlab{b}})}\BibitemShut
  {NoStop}%
\bibitem [{\citenamefont {Zhou}\ \emph {et~al.}(2019)\citenamefont {Zhou},
  \citenamefont {Zhao}, \citenamefont {Yuan},\ and\ \citenamefont
  {Ma}}]{Zhou2019structure}%
  \BibitemOpen
  \bibfield  {author} {\bibinfo {author} {\bibfnamefont {Y.}~\bibnamefont
  {Zhou}}, \bibinfo {author} {\bibfnamefont {Q.}~\bibnamefont {Zhao}}, \bibinfo
  {author} {\bibfnamefont {X.}~\bibnamefont {Yuan}},\ and\ \bibinfo {author}
  {\bibfnamefont {X.}~\bibnamefont {Ma}},\ }\bibfield  {title} {\bibinfo
  {title} {Detecting multipartite entanglement structure with minimal
  resources},\ }\href {https://doi.org/10.1038/s41534-019-0200-9} {\bibfield
  {journal} {\bibinfo  {journal} {npj Quantum Information}\ }\textbf {\bibinfo
  {volume} {5}},\ \bibinfo {pages} {83} (\bibinfo {year} {2019})}\BibitemShut
  {NoStop}%
\bibitem [{\citenamefont {Zhao}\ and\ \citenamefont
  {Zhou}(2022)}]{zhao2020constructing}%
  \BibitemOpen
  \bibfield  {author} {\bibinfo {author} {\bibfnamefont {Q.}~\bibnamefont
  {Zhao}}\ and\ \bibinfo {author} {\bibfnamefont {Y.}~\bibnamefont {Zhou}},\
  }\bibfield  {title} {\bibinfo {title} {Constructing multipartite bell
  inequalities from stabilizers},\ }\href
  {https://doi.org/10.1103/PhysRevResearch.4.043215} {\bibfield  {journal}
  {\bibinfo  {journal} {Phys. Rev. Res.}\ }\textbf {\bibinfo {volume} {4}},\
  \bibinfo {pages} {043215} (\bibinfo {year} {2022})}\BibitemShut {NoStop}%
\bibitem [{\citenamefont {Hein}\ \emph {et~al.}(2006)\citenamefont {Hein},
  \citenamefont {Dür}, \citenamefont {Eisert}, \citenamefont {Raussendorf},
  \citenamefont {den Nest},\ and\ \citenamefont {Briegel}}]{Hein2006Graph}%
  \BibitemOpen
  \bibfield  {author} {\bibinfo {author} {\bibfnamefont {M.}~\bibnamefont
  {Hein}}, \bibinfo {author} {\bibfnamefont {W.}~\bibnamefont {Dür}}, \bibinfo
  {author} {\bibfnamefont {J.}~\bibnamefont {Eisert}}, \bibinfo {author}
  {\bibfnamefont {R.}~\bibnamefont {Raussendorf}}, \bibinfo {author}
  {\bibfnamefont {M.~V.}\ \bibnamefont {den Nest}},\ and\ \bibinfo {author}
  {\bibfnamefont {H.~J.}\ \bibnamefont {Briegel}},\ }\href@noop {} {\bibinfo
  {title} {Entanglement in graph states and its applications}} (\bibinfo {year}
  {2006}),\ \Eprint {https://arxiv.org/abs/quant-ph/0602096}
  {arXiv:quant-ph/0602096 [quant-ph]} \BibitemShut {NoStop}%
\bibitem [{\citenamefont {Kitaev}(2003{\natexlab{b}})}]{KITAEV2003Fault}%
  \BibitemOpen
  \bibfield  {author} {\bibinfo {author} {\bibfnamefont {A.}~\bibnamefont
  {Kitaev}},\ }\bibfield  {title} {\bibinfo {title} {Fault-tolerant quantum
  computation by anyons},\ }\href
  {https://doi.org/https://doi.org/10.1016/S0003-4916(02)00018-0} {\bibfield
  {journal} {\bibinfo  {journal} {Annals of Physics}\ }\textbf {\bibinfo
  {volume} {303}},\ \bibinfo {pages} {2 } (\bibinfo {year}
  {2003}{\natexlab{b}})}\BibitemShut {NoStop}%
\bibitem [{\citenamefont {Tillich}\ and\ \citenamefont
  {Zémor}(2014)}]{Tillich2014HGPcode}%
  \BibitemOpen
  \bibfield  {author} {\bibinfo {author} {\bibfnamefont {J.-P.}\ \bibnamefont
  {Tillich}}\ and\ \bibinfo {author} {\bibfnamefont {G.}~\bibnamefont
  {Zémor}},\ }\bibfield  {title} {\bibinfo {title} {Quantum ldpc codes with
  positive rate and minimum distance proportional to the square root of the
  blocklength},\ }\href {https://doi.org/10.1109/TIT.2013.2292061} {\bibfield
  {journal} {\bibinfo  {journal} {IEEE Transactions on Information Theory}\
  }\textbf {\bibinfo {volume} {60}},\ \bibinfo {pages} {1193} (\bibinfo {year}
  {2014})}\BibitemShut {NoStop}%
\bibitem [{\citenamefont {Gottesman}\ \emph {et~al.}(2001)\citenamefont
  {Gottesman}, \citenamefont {Kitaev},\ and\ \citenamefont
  {Preskill}}]{Gottesman2001GKPcode}%
  \BibitemOpen
  \bibfield  {author} {\bibinfo {author} {\bibfnamefont {D.}~\bibnamefont
  {Gottesman}}, \bibinfo {author} {\bibfnamefont {A.}~\bibnamefont {Kitaev}},\
  and\ \bibinfo {author} {\bibfnamefont {J.}~\bibnamefont {Preskill}},\
  }\bibfield  {title} {\bibinfo {title} {Encoding a qubit in an oscillator},\
  }\href {https://doi.org/10.1103/PhysRevA.64.012310} {\bibfield  {journal}
  {\bibinfo  {journal} {Phys. Rev. A}\ }\textbf {\bibinfo {volume} {64}},\
  \bibinfo {pages} {012310} (\bibinfo {year} {2001})}\BibitemShut {NoStop}%
\bibitem [{\citenamefont {Liu}\ \emph {et~al.}(2021{\natexlab{b}})\citenamefont
  {Liu}, \citenamefont {Shang},\ and\ \citenamefont {Zhang}}]{Liu2021CV}%
  \BibitemOpen
  \bibfield  {author} {\bibinfo {author} {\bibfnamefont {Y.-C.}\ \bibnamefont
  {Liu}}, \bibinfo {author} {\bibfnamefont {J.}~\bibnamefont {Shang}},\ and\
  \bibinfo {author} {\bibfnamefont {X.}~\bibnamefont {Zhang}},\ }\bibfield
  {title} {\bibinfo {title} {Efficient verification of entangled
  continuous-variable quantum states with local measurements},\ }\href
  {https://doi.org/10.1103/PhysRevResearch.3.L042004} {\bibfield  {journal}
  {\bibinfo  {journal} {Phys. Rev. Research}\ }\textbf {\bibinfo {volume}
  {3}},\ \bibinfo {pages} {L042004} (\bibinfo {year}
  {2021}{\natexlab{b}})}\BibitemShut {NoStop}%
\bibitem [{\citenamefont {Leung}\ \emph {et~al.}(1997)\citenamefont {Leung},
  \citenamefont {Nielsen}, \citenamefont {Chuang},\ and\ \citenamefont
  {Yamamoto}}]{Leung1997Approximate}%
  \BibitemOpen
  \bibfield  {author} {\bibinfo {author} {\bibfnamefont {D.~W.}\ \bibnamefont
  {Leung}}, \bibinfo {author} {\bibfnamefont {M.~A.}\ \bibnamefont {Nielsen}},
  \bibinfo {author} {\bibfnamefont {I.~L.}\ \bibnamefont {Chuang}},\ and\
  \bibinfo {author} {\bibfnamefont {Y.}~\bibnamefont {Yamamoto}},\ }\bibfield
  {title} {\bibinfo {title} {Approximate quantum error correction can lead to
  better codes},\ }\href {https://doi.org/10.1103/PhysRevA.56.2567} {\bibfield
  {journal} {\bibinfo  {journal} {Phys. Rev. A}\ }\textbf {\bibinfo {volume}
  {56}},\ \bibinfo {pages} {2567} (\bibinfo {year} {1997})}\BibitemShut
  {NoStop}%
\bibitem [{\citenamefont {Hayden}\ and\ \citenamefont
  {Penington}(2020)}]{Hayden2020Approximate}%
  \BibitemOpen
  \bibfield  {author} {\bibinfo {author} {\bibfnamefont {P.}~\bibnamefont
  {Hayden}}\ and\ \bibinfo {author} {\bibfnamefont {G.}~\bibnamefont
  {Penington}},\ }\bibfield  {title} {\bibinfo {title} {Approximate quantum
  error correction revisited: introducing the alpha-bit},\ }\href
  {https://link.springer.com/article/10.1007/s00220-020-03689-1} {\bibfield
  {journal} {\bibinfo  {journal} {Communications in Mathematical Physics}\
  }\textbf {\bibinfo {volume} {374}},\ \bibinfo {pages} {369} (\bibinfo {year}
  {2020})}\BibitemShut {NoStop}%
\bibitem [{\citenamefont {Liu}\ \emph {et~al.}(2020)\citenamefont {Liu},
  \citenamefont {Shang}, \citenamefont {Yu},\ and\ \citenamefont
  {Zhang}}]{liu2019efficient}%
  \BibitemOpen
  \bibfield  {author} {\bibinfo {author} {\bibfnamefont {Y.-C.}\ \bibnamefont
  {Liu}}, \bibinfo {author} {\bibfnamefont {J.}~\bibnamefont {Shang}}, \bibinfo
  {author} {\bibfnamefont {X.-D.}\ \bibnamefont {Yu}},\ and\ \bibinfo {author}
  {\bibfnamefont {X.}~\bibnamefont {Zhang}},\ }\bibfield  {title} {\bibinfo
  {title} {Efficient verification of quantum processes},\ }\href
  {https://doi.org/10.1103/PhysRevA.101.042315} {\bibfield  {journal} {\bibinfo
   {journal} {Phys. Rev. A}\ }\textbf {\bibinfo {volume} {101}},\ \bibinfo
  {pages} {042315} (\bibinfo {year} {2020})}\BibitemShut {NoStop}%
\bibitem [{\citenamefont {Zhu}\ and\ \citenamefont
  {Zhang}(2020)}]{zhu2019efficientgate}%
  \BibitemOpen
  \bibfield  {author} {\bibinfo {author} {\bibfnamefont {H.}~\bibnamefont
  {Zhu}}\ and\ \bibinfo {author} {\bibfnamefont {H.}~\bibnamefont {Zhang}},\
  }\bibfield  {title} {\bibinfo {title} {Efficient verification of quantum
  gates with local operations},\ }\href
  {https://doi.org/10.1103/PhysRevA.101.042316} {\bibfield  {journal} {\bibinfo
   {journal} {Phys. Rev. A}\ }\textbf {\bibinfo {volume} {101}},\ \bibinfo
  {pages} {042316} (\bibinfo {year} {2020})}\BibitemShut {NoStop}%
\bibitem [{\citenamefont {Zeng}\ \emph {et~al.}(2020)\citenamefont {Zeng},
  \citenamefont {Zhou},\ and\ \citenamefont {Liu}}]{Zeng2020gate}%
  \BibitemOpen
  \bibfield  {author} {\bibinfo {author} {\bibfnamefont {P.}~\bibnamefont
  {Zeng}}, \bibinfo {author} {\bibfnamefont {Y.}~\bibnamefont {Zhou}},\ and\
  \bibinfo {author} {\bibfnamefont {Z.}~\bibnamefont {Liu}},\ }\bibfield
  {title} {\bibinfo {title} {Quantum gate verification and its application in
  property testing},\ }\href {https://doi.org/10.1103/PhysRevResearch.2.023306}
  {\bibfield  {journal} {\bibinfo  {journal} {Phys. Rev. Research}\ }\textbf
  {\bibinfo {volume} {2}},\ \bibinfo {pages} {023306} (\bibinfo {year}
  {2020})}\BibitemShut {NoStop}%
\bibitem [{\citenamefont {Gebhart}\ \emph {et~al.}(2023)\citenamefont
  {Gebhart}, \citenamefont {Santagati}, \citenamefont {Gentile}, \citenamefont
  {Gauger}, \citenamefont {Craig}, \citenamefont {Ares}, \citenamefont
  {Banchi}, \citenamefont {Marquardt}, \citenamefont {Pezz{\`e}},\ and\
  \citenamefont {Bonato}}]{gebhart2023learning}%
  \BibitemOpen
  \bibfield  {author} {\bibinfo {author} {\bibfnamefont {V.}~\bibnamefont
  {Gebhart}}, \bibinfo {author} {\bibfnamefont {R.}~\bibnamefont {Santagati}},
  \bibinfo {author} {\bibfnamefont {A.~A.}\ \bibnamefont {Gentile}}, \bibinfo
  {author} {\bibfnamefont {E.~M.}\ \bibnamefont {Gauger}}, \bibinfo {author}
  {\bibfnamefont {D.}~\bibnamefont {Craig}}, \bibinfo {author} {\bibfnamefont
  {N.}~\bibnamefont {Ares}}, \bibinfo {author} {\bibfnamefont {L.}~\bibnamefont
  {Banchi}}, \bibinfo {author} {\bibfnamefont {F.}~\bibnamefont {Marquardt}},
  \bibinfo {author} {\bibfnamefont {L.}~\bibnamefont {Pezz{\`e}}},\ and\
  \bibinfo {author} {\bibfnamefont {C.}~\bibnamefont {Bonato}},\ }\bibfield
  {title} {\bibinfo {title} {Learning quantum systems},\ }\href
  {https://www.nature.com/articles/s42254-022-00552-1} {\bibfield  {journal}
  {\bibinfo  {journal} {Nature Reviews Physics}\ }\textbf {\bibinfo {volume}
  {5}},\ \bibinfo {pages} {141} (\bibinfo {year} {2023})}\BibitemShut {NoStop}%
\bibitem [{\citenamefont {Zheng}\ \emph {et~al.}(2024)\citenamefont {Zheng},
  \citenamefont {Yu}, \citenamefont {Zhang}, \citenamefont {Xu},\ and\
  \citenamefont {Wang}}]{zheng2024efficient}%
  \BibitemOpen
  \bibfield  {author} {\bibinfo {author} {\bibfnamefont {C.}~\bibnamefont
  {Zheng}}, \bibinfo {author} {\bibfnamefont {X.}~\bibnamefont {Yu}}, \bibinfo
  {author} {\bibfnamefont {Z.}~\bibnamefont {Zhang}}, \bibinfo {author}
  {\bibfnamefont {P.}~\bibnamefont {Xu}},\ and\ \bibinfo {author}
  {\bibfnamefont {K.}~\bibnamefont {Wang}},\ }\href@noop {} {\bibinfo {title}
  {Efficient verification of stabilizer code subspaces with local
  measurements}} (\bibinfo {year} {2024}),\ \Eprint
  {https://arxiv.org/abs/2409.19699} {arXiv:2409.19699 [quant-ph]} \BibitemShut
  {NoStop}%
\end{thebibliography}%

\appendix
\onecolumngrid
\newpage

\section{Proof of Theorem \ref{theo:verification}}\label{app:prooftheo1}
First, the following lemma gives the range of the passing probability $\Tr{\Omega\rho}$ of a single test.

\begin{lemma}\label{lemma:passprob}
    The passing probability of $\rho$ of a single test satisfies 
    \begin{equation}
    \begin{aligned}
      1-\Delta_{\max}(\Omega)\epsilon_{\rho} \leq \Tr{\Omega\rho}\leq 1-\Delta_{\min}(\Omega)\epsilon_{\rho},
    \end{aligned}
    \end{equation}
where $\epsilon_{\rho}:=1-F(\mathcal{P},\rho)$ is the infidelity of $\rho$ to subspace $\mathcal{V}$.
\end{lemma}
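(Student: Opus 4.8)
The plan is to expand the single-test passing probability $\Tr{\Omega\rho}$ directly through the spectral decomposition of $\Omega$ given in Fact~\ref{Prop:omega}, and then to sandwich the contribution of the orthogonal complement $\mathcal{V}^{\bot}$ using the extremal eigenvalues $\lambda_{d_{\mathcal{V}}+1}$ and $\lambda_d$.

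Concretely, I would start from Eq.~\eqref{eq:decom} and write
\begin{equation}
\Tr{\Omega\rho}=\Tr{\mathcal{P}\rho}+\sum_{j=d_{\mathcal{V}}+1}^{d}\lambda_j\,\langle\psi_j\vert\rho\vert\psi_j\rangle .
\end{equation}
Set $q_j:=\langle\psi_j\vert\rho\vert\psi_j\rangle\ge 0$. Because $\{\ket{\psi_j}\}_{j=d_{\mathcal{V}}+1}^{d}$ is an orthonormal basis of $\mathcal{V}^{\bot}$, we have $\id-\mathcal{P}=\sum_{j}\ketbra{\psi_j}{\psi_j}$, hence $\sum_{j}q_j=\Tr{(\id-\mathcal{P})\rho}=1-F(\mathcal{P},\rho)=\epsilon_{\rho}$ and $\Tr{\mathcal{P}\rho}=1-\epsilon_{\rho}$. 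Substituting gives
\begin{equation}
\Tr{\Omega\rho}=1-\epsilon_{\rho}+\sum_{j=d_{\mathcal{V}}+1}^{d}\lambda_j q_j .
\end{equation}

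To finish, I would observe that $\Omega=\sum_l p_l E_l$ is a convex combination of POVM effects, so $0\preceq\Omega\preceq\id$ and every eigenvalue lies in $[0,1]$; in particular, by the descending ordering, $\lambda_d\le\lambda_j\le\lambda_{d_{\mathcal{V}}+1}$ for all $j$ in the sum. Using $q_j\ge 0$ and $\sum_j q_j=\epsilon_{\rho}$, together with $\Delta_{\min}(\Omega)=1-\lambda_{d_{\mathcal{V}}+1}$ and $\Delta_{\max}(\Omega)=1-\lambda_d$, this yields
\begin{equation}
\bigl(1-\Delta_{\max}(\Omega)\bigr)\epsilon_{\rho}\le\sum_{j=d_{\mathcal{V}}+1}^{d}\lambda_j q_j\le\bigl(1-\Delta_{\min}(\Omega)\bigr)\epsilon_{\rho},
\end{equation}
and plugging this into the previous identity gives precisely $1-\Delta_{\max}(\Omega)\epsilon_{\rho}\le\Tr{\Omega\rho}\le 1-\Delta_{\min}(\Omega)\epsilon_{\rho}$. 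There is no real obstacle in this argument; the only point to be careful about is the justification that the orthogonal-complement eigenvalues genuinely sit in $[\lambda_d,\lambda_{d_{\mathcal{V}}+1}]$ — positivity from $\Omega$ being an average of effects, and the bracketing from the descending ordering of Fact~\ref{Prop:omega} — after which the inequality is immediate from linearity of the trace and the normalization $\sum_j q_j=\epsilon_{\rho}$.
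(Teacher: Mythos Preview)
Your argument is correct and essentially the same as the paper's own proof: both expand $\Tr{\Omega\rho}$ via the spectral decomposition in Fact~\ref{Prop:omega}, identify $\sum_j q_j=\Tr{(\id-\mathcal{P})\rho}=\epsilon_\rho$, and sandwich $\sum_j\lambda_j q_j$ by the extremal orthogonal-complement eigenvalues. The paper only writes out the upper bound explicitly and notes the lower bound is analogous, whereas you spell out both and add the justification $0\preceq\Omega\preceq\id$; otherwise the proofs coincide.
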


\begin{proof}
For simplicity, we only prove the upper bound. The lower bound can be proved in the same way. By definition and the decomposition in Eq.~\eqref{eq:decom},
\begin{equation}
\begin{aligned}
\Tr{\Omega\rho}
&=\Tr{\mathcal{P}\rho}+\Tr{\sum_{j=d_{\mathcal{V}}+1}^{d}\lambda_j \ketbra{\phi_j}{\phi_j}\rho}\\
&\leq 1-\epsilon_{\rho}+\lambda_{d_{\mathcal{V}}+1}\sum_{j=d_{\mathcal{V}}+1}^{d}\Tr{\ketbra{\phi_j}{\phi_j}\rho} \\
&=1-\epsilon_{\rho}+\lambda_{d_{\mathcal{V}}+1}\epsilon_{\rho}=1-\Delta_{\min}(\Omega)\epsilon_{\rho}.
\end{aligned}
\end{equation}
Here we use the fact that $\sum_{j=d_{\mathcal{V}}+1}^{d}\ketbra{\phi_j}{\phi_j}=\id-\mathcal{P}$ is the projector on $\mathcal{V}^{\bot}$. Note that the inequality can be saturated as all the nonzero $\Tr{\ketbra{\phi_j}{\phi_j}\rho}$ corresponds to eigenvalues equaling to $\lambda_{d_{\mathcal{V}}+1}$.
\end{proof}

Based on the former lemma, the expected total number of passed tests satisfies 
\begin{equation}
\begin{aligned}
  \mathbb{E}[N_{\text{pass}}]\in[\left(1-\Delta_{\max}(\Omega)\epsilon_{\rho}\right)N,\left(1-\Delta_{\min}(\Omega)\epsilon_{\rho}\right)N].
\end{aligned}
\end{equation}

If the device is ``good", there should be $\epsilon_{\rho}\leq \tau\epsilon$. By adopting the Chernoff–Hoeffding theorem, we can derive
\begin{equation}\label{eq:goodlow1}
\begin{aligned}
\mathrm{Pr}\left(N_{\text{pass}} \leq p_0N\right)&=\mathrm{Pr}\left(\frac{N_{\text{pass}}}{N}\leq \frac{\mathbb{E}[N_{\text{pass}}]}{N}-\frac{\mathbb{E}[N_{\text{pass}}]-p_0N}{N}\right)\\
&\leq e^{-D\left[p_0\|\frac{\mathbb{E}[N_{\text{pass}}]}{N}\right]N}\\
&\leq e^{-D\left[p_0\|1-\tilde{\epsilon}/r\right]N}=\delta.
\end{aligned}
\end{equation}
The third line is based on
\begin{equation}\label{eq:geq}
\frac{\mathbb{E}[N_{\text{pass}}]}{N}\geq 1-\Delta_{\max}(\Omega)\epsilon_{\rho}\geq 1-\tilde{\epsilon}/r\geq p_0.
\end{equation}
Since $p_0$ is the solution to the equation $D\left[p_0\|1-\tilde{\epsilon}/r\right]=D\left[p_0\|1-\tilde{\epsilon}\right]$, it should be between $1-\tilde{\epsilon}$ and $1-\tilde{\epsilon}/r$. The final inequality in Eq.~\eqref{eq:geq} comes from $1-\tilde{\epsilon}\leq p_0\leq 1-\tilde{\epsilon}/r$ when $r>1$.

If the device is ``bad", there should be $\epsilon_{\rho}\geq \epsilon$. We can similarly derive
\begin{equation}\label{eq:badhigh1}
\begin{aligned}
\mathrm{Pr}\left(N_{\text{pass}} \geq p_0N\right)&=\mathrm{Pr}\left(\frac{N_{\text{pass}}}{N}\geq \frac{\mathbb{E}[N_{\text{pass}}]}{N}+\frac{p_0N-\mathbb{E}[N_{\text{pass}}]}{N}\right)\\
&\leq e^{-D\left[p_0\|\frac{\mathbb{E}[N_{\text{pass}}]}{N}\right]N}\\
&\leq e^{-D\left[p_0\|1-\tilde{\epsilon}\right]N}=\delta.
\end{aligned}
\end{equation}
The third line is based on
\begin{equation}\label{eq:leq}
\frac{\mathbb{E}[N_{\text{pass}}]}{N}\leq 1-\Delta_{\min}(\Omega)\epsilon_{\rho}\leq 1-\tilde{\epsilon}\leq p_0.
\end{equation}

\section{Proof of Proposition \ref{prop:estimation}}\label{app:proofprop1}
First, we can give an interval estimator of the passing probability $\frac{\mathbb{E}[N_{\text{pass}}]}{N}$ based on the simplified confidence interval of the Bernoulli model:
\begin{equation}
\begin{aligned}
p-\xi\leq \frac{\mathbb{E}[N_{\text{pass}}]}{N}\leq p+\xi
\end{aligned}
\end{equation}
with approximate confidence level $1-\delta$.

From Lemma.~\ref{lemma:passprob}, $\frac{\mathbb{E}[N_{\text{pass}}]}{N}=\Tr{\Omega\rho}\in[1-\Delta_{\max}(\Omega)\epsilon_{\rho},1-\Delta_{\min}(\Omega)\epsilon_{\rho}]$. By combining these two conditions, with approximate confidence level $1-\delta$ we need $1-\Delta_{\max}(\Omega)\epsilon_{\rho}\leq p+\xi$ and $1-\Delta_{\min}(\Omega)\epsilon_{\rho}\geq p-\xi$, from which we can solve that 
\begin{equation}
\begin{aligned}
\frac{1-p-\xi}{\Delta_{\max}(\Omega)}\leq \epsilon_{\rho}\leq \frac{1-p+\xi}{\Delta_{\min}(\Omega)}.
\end{aligned}
\end{equation}
By combining it with $0\leq\epsilon_{\rho}\leq1$, we can derive the proposition.

\section{Spectral gaps of verification operators of stabilizer codes}\label{app:stabilizerdiscussion}
We can write $\Omega_{\text{all}}=\frac{\id+\mathcal{P}}{2}=\mathcal{P}+\frac{1}{2}\mathcal{P}^{\bot}$, where $\mathcal{P}^{\bot}$ represents the projector onto the subspace orthogonal to $\mathcal{V}$. This expression is just a diagonalization of $\Omega_{\text{all}}$, showing that the second largest and smallest eigenvalues are both $\frac{1}{2}$. Thus, The spectral gaps are $\Delta_{\min}(\Omega_{\text{all}})=\Delta_{\max}(\Omega_{\text{all}})=\frac{1}{2}$.

For $\Omega_{\text{gen}}$, consider the subspace $\mathcal{V}_{\vec{r}}$ with projector 
\begin{equation}
\begin{aligned}
\mathcal{P}_{\vec{r}}=\prod_{i=1}^m\frac{\id+(-1)^{\vec{r}(i)}S_i}{2}
\end{aligned}
\end{equation}
characterized by a binary vector $\vec{r}=(\cdots,\vec{r}(i),\cdots)$ of dimension $m$. It is clear that the code subspace $\mathcal{V}$ corresponds to $\vec{r}=\vec{0}$. One can verify their orthogonality by
\begin{equation}
\mathcal{P}_{\vec{r}}\mathcal{P}_{\vec{r}'}=\delta(\vec{r},\vec{r}')\mathcal{P}_{\vec{r}}
\end{equation}
where $\delta(a,b)=1$ if $a=b$ and $\delta(a,b)=0$ otherwise. Since $S_i\mathcal{P}_{\vec{r}}=(-1)^{\vec{r}(i)} \mathcal{P}_{\vec{r}}$, we can derive
\begin{equation}
\Omega_{\text{gen}}\mathcal{P}_{\vec{r}}=\frac{1}{m}\sum_{i=1}^m \frac{\id+S_i}{2}\mathcal{P}_{\vec{r}}=\left(1-\frac{1}{m}\sum_{i=1}^m \vec{r}(i)\right) \mathcal{P}_{\vec{r}}.
\end{equation}
Here, the calculation of $\sum_{i=1}^m \vec{r}(i)$ is performed on the field of $\mathbb{R}$ rather than $\mathbb{Z}_2$. 

For different $\vec{r}$, the eigenstates in subspace $\mathcal{P}_{\vec{r}}$ are always eigenstates of $\Omega_{\text{chr}}(\mathcal{S})$ with eigenvalues $1-\frac{1}{m}\sum_{i=1}^m \vec{r}(i)$. There are $2^m$ different $\vec{r}$, corresponding to $2^m$ orthogonal subspace $\mathcal{V}_{\vec{r}}$ of the whole Hilbert space. Each of them has dimension $2^k$. Therefore, $1-\frac{1}{m}\sum_{i=1}^m \vec{r}(i)$ are all the possible eigenvalues. The largest eigenvalue is $1$, corresponding to $\vec{r}=0$ with degeneracy $2^k$. The second largest and smallest eigenvalues are $1-\frac{1}{m}$ and $0$. The spectral gaps are $\Delta_{\min}(\Omega_{\text{gen}})=\frac{1}{m}$ and $\Delta_{\max}(\Omega_{\text{gen}})=1$.

For $\Omega_{\text{chr}}(\mathcal{S})$ in Eq.~\eqref{Eq:OmegaColor}, we still consider the subspace $\mathcal{V}_{\vec{r}}$ with projector $\mathcal{P}_{\vec{r}}$
\begin{equation}
\begin{aligned}
\mathcal{P}_{\vec{r}}=\prod_{i=1}^m\frac{\id+(-1)^{\vec{r}(i)}S_i}{2}=\prod_{u=1}^{\chi(G_{\mathcal{S}})}\prod_{v_i\in \mathcal{I}_u}\frac{\id+(-1)^{\vec{r}(i)}S_i}{2}=\prod_{u=1}^{\chi(G_{\mathcal{S}})}\mathcal{P}_{\mathcal{I}_u,\vec{r}_u}
\end{aligned}
\end{equation}
where $\mathcal{P}_{\mathcal{I}_u,\vec{r}_u}$ is defined as
\begin{equation}
\begin{aligned}
\mathcal{P}_{\mathcal{I}_u,\vec{r}_u}=\prod_{v_i\in \mathcal{I}_u}\frac{\id+(-1)^{r_u(i)}S_i}{2}
\end{aligned}
\end{equation}
depending on the restricted vector on the index of $\mathcal{I}_u$, i.e., $\vec{r}_u=\sum_{v_i\in \mathcal{I}_u}r(i)\vec{e}_i$ with $\sum_{u=1}^{\chi(G_{\mathcal{S}})}\vec{r}_u=\vec{r}$. Then, one has 
\begin{equation}
\begin{aligned}
\mathcal{P}_{\mathcal{I}_u,\vec{r}_u}\cdot \mathcal{P}_{\mathcal{I}_u}=\delta(\vec{r}_u,0)\mathcal{P}_{\mathcal{I}_u}.
\end{aligned}
\end{equation}
where $\mathcal{P}_{\mathcal{I}_u}$ is defined in Eq.~\eqref{eq:PIu} for the all-zero vector. As a result, 
\begin{equation}
\begin{aligned}
\Omega_{\text{chr}}(\mathcal{S}) \mathcal{P}_{\vec{r}}&=\frac{1}{\chi(G_{\mathcal{S}})}\sum_{u=1}^{\chi(G_{\mathcal{S}})}\mathcal{P}_{\mathcal{I}_u}\mathcal{P}_{\vec{r}}\\
&=\frac{1}{\chi(G_{\mathcal{S}})}\sum_{u=1}^{\chi(G_{\mathcal{S}})}\delta(\vec{r}_u,0)\mathcal{P}_{\vec{r}}.
\end{aligned}
\end{equation}
For different $\vec{r}$, the eigenstates in subspace $\mathcal{P}_{\vec{r}}$ are always eigenstates of $\Omega_{\text{chr}}(\mathcal{S})$ and their eigenvalue are 
\begin{equation}
\begin{aligned}
\lambda_{\vec{r}}=\frac{\sum_{u=1}^{\chi(G_{\mathcal{S}})}\delta(\vec{r}_u,0)}{\chi(G_{\mathcal{S}})}.
\end{aligned}
\end{equation}
There are $2^m$ different $\vec{r}$, corresponding to $2^m$ orthogonal subspace $\mathcal{V}_{\vec{r}}$ of the whole Hilbert space. Each of them has dimension $2^k$. Therefore, $\lambda_{\vec{r}}$ are all the eigenvalues of $\Omega_{\text{chr}}(\mathcal{S})$. Only when all the $\vec{r}_u=0$, i.e., $\vec{r}=0$ and $\mathcal{V}_{\vec{r}}$ is the code subspace, we have $\lambda_{\vec{r}}=1$. The second largest and smallest eigenvalues are $1-\frac{1}{\chi(G_{\mathcal{S}})}$ and $0$. The spectral gaps are $\Delta_{\min}(\Omega_{\text{chr}}(\mathcal{S}))=\frac{1}{\chi(G_{\mathcal{S}})}$ and $\Delta_{\max}(\Omega_{\text{chr}}(\mathcal{S}))=1$.

\section{Spectral gaps of verification operators of QLDPC codes}\label{app:qldpcdiscussion}
By replacing $S_i$ with $T_i$ in the analysis of $\Omega_{\text{gen}}$ and $\Omega_{\text{chr}}(\mathcal{S})$ in Appendix \ref{app:stabilizerdiscussion}, one can similarly derive the spectral gaps of $\Omega^{(s)}_{\text{gen}}$ and $\Omega^{(s)}_{\text{chr}}$.

For $\Omega^{(1)}_{\text{gen}}$ given in Eq.~\eqref{eq:LDPCgen1}, we can rewrite it as 
\begin{equation}
\Omega^{(1)}_{\text{gen}}=\frac{1}{m}\sum_i\left(\frac{1}{a_i}\pi_i+\left(1-\frac{1}{a_i}\right)\frac{\id}{2}\right)=\frac{1}{2m}\sum_i\left(\id+\frac{1}{a_i}T_i\right).
\end{equation}
With a similar construction of $\mathcal{P}_{\vec{r}}$ as that in Appendix \ref{app:stabilizerdiscussion}, we can derive its eigenvalues
\begin{equation}
\lambda_{\vec{r}}=\frac{1}{2m}\sum_{i=1}^m\left(1+\frac{(-1)^{\vec{r}(i)}}{a_i}\right)=\frac{1}{2}+\frac{1}{2m}\sum_{i=1}^m \frac{(-1)^{\vec{r}(i)}}{a_i}.
\end{equation}
Since $a_i>0$ for all $i$, the largest eigenvalue should be $\frac{1}{2}+\frac{1}{2m}\sum_{i=1}^m \frac{1}{a_i}$, the second largest eigenvalue should be $\frac{1}{2}+\frac{1}{2m}\left(\sum_{i=1}^m \frac{1}{a_i}-\frac{2}{\max_i a_i}\right)$, and the smallest eigenvalue should be $\frac{1}{2}-\frac{1}{2m}\sum_{i=1}^m \frac{1}{a_i}$. Thus, the spectral gaps are $\Delta_{\min}(\Omega^{(1)}_{\text{gen}})=\frac{1}{m\cdot \max_i a_i}$ and $\Delta_{\max}(\Omega^{(1)}_{\text{gen}})=\frac{1}{m}\sum_{i=1}^m \frac{1}{a_i}$.

Furthermore, we can construct
\begin{equation}
\begin{aligned}
\Omega_{\text{chr}}^{(1)}&=\frac{1}{\chi(G_{\Pi})}\sum_{u=1}^{\chi(G_{\Pi})}\prod_{v_i\in\mathcal{I}_u}\left(\frac{1}{a_i}\pi_i+\left(1-\frac{1}{a_i}\right)\frac{\id}{2}\right)\\
&=\frac{1}{\chi(G_{\Pi})}\sum_{u=1}^{\chi(G_{\Pi})}\prod_{v_i\in\mathcal{I}_u}\sum_{P_{i,j}\in \mathcal{T}_i}p_{ij}\frac{\id+P_{i,j}}{2}
\end{aligned}
\end{equation}
as we did in the construction of $\Omega_{\text{chr}}(\mathcal{S})$ for stabilizer codes. However, with a similar analysis, we can find that the maximum eigenvalue 
\begin{equation}
\lambda(\Omega_{\text{chr}}^{(1)})=\frac{1}{\chi(G_{\Pi})}\sum_{u=1}^{\chi(G_{\Pi})}\prod_{v_i\in\mathcal{I}_u}\frac{1}{2}\left(1+\frac{1}{a_i}\right)
\end{equation}
and the spectral gaps
\begin{equation}
\begin{gathered}
\Delta_{\min}(\Omega_{\text{chr}}^{(1)})=\frac{1}{\chi(G_{\Pi})}\min_{i_0}\left(\frac{1}{a_{i_0}}\prod_{i:\substack{\mathcal{I}_u\ni v_{i_0}\\v_i\in\mathcal{I}_u,i\neq i_0}}\frac{1}{2}\left(1+\frac{1}{a_i}\right)\right),\\
\Delta_{\max}(\Omega_{\text{chr}}^{(1)})=\frac{1}{\chi(G_{\Pi})}\sum_{u=1}^{\chi(G_{\Pi})}\left(\prod_{v_i\in\mathcal{I}_u}\frac{1}{2}\left(1+\frac{1}{a_i}\right)-\prod_{v_i\in\mathcal{I}_u}\frac{1}{2}\left(1-\frac{1}{a_i}\right)\right).
\end{gathered}
\end{equation}
Roughly speaking, as long as $a_i>1$ for most of $i$, $\lambda(\Omega_{\text{chr}}^{(1)})$, $\Delta_{\min}(\Omega_{\text{chr}}^{(1)})$ and $\Delta_{\max}(\Omega_{\text{chr}}^{(1)})$ go exponentially small with $m$. In general, since $\Delta_{\min}(\Omega_{\text{chr}}^{(1)})\leq \frac{1}{\chi(G_{\Pi})}\min_{u}\prod_{v_i\in\mathcal{I}_u}\frac{1}{2}\left(1+\frac{1}{a_i}\right)\leq\frac{1}{\chi(G_{\Pi})}\lambda(\Omega_{\text{chr}}^{(1)})$,
\begin{equation}
\begin{aligned}
N&\sim \frac{8\lambda(\Omega_{\text{chr}}^{(1)})(1-\lambda(\Omega_{\text{chr}}^{(1)}))\ln{1/\delta}}{\left(\Delta_{\min}(\Omega_{\text{chr}}^{(1)})-\Delta_{\max}(\Omega_{\text{chr}}^{(1)})\tau\right)^2 \epsilon^2}\\
&\geq \frac{8\lambda(\Omega_{\text{chr}}^{(1)})}{(\Delta_{\min}(\Omega_{\text{chr}}^{(1)}))^2}\cdot\frac{(1-\lambda(\Omega_{\text{chr}}^{(1)}))\ln{1/\delta}}{\epsilon^2}\\
&\geq \frac{8\chi(G_{\Pi})}{\Delta_{\min}(\Omega_{\text{chr}}^{(1)})}\cdot\frac{(1-\lambda(\Omega_{\text{chr}}^{(1)}))\ln{1/\delta}}{\epsilon^2}.
\end{aligned}
\end{equation}
Therefore, $N$ will scale exponentially with $m$ as long as $\Delta_{\min}(\Omega_{\text{chr}}^{(1)})$ goes exponentially small with $m$ and $\lambda(\Omega_{\text{chr}}^{(1)})$ is not negligible, which indicates that the construction of $\Omega_{\text{chr}}^{(1)}$ is usually useless.

\section{Proof of Proposition \ref{prop:imperfectomega}}\label{app:proofprop2}
First, one can obtain the passing probability 
\begin{equation}
\lambda-\Delta_{\max}(\Omega)\epsilon_{\rho} \leq \Tr{\Omega\rho}\leq \lambda-\Delta_{\min}(\Omega)\epsilon_{\rho}
\end{equation}
as we have done in Lemma \ref{lemma:passprob}. Consequently, the expected passing number shows 
\begin{equation}
\mathbb{E}[N_{\text{pass}}]\in[\left(\lambda-\Delta_{\max}(\Omega)\epsilon_{\rho}\right)N,\left(\lambda-\Delta_{\min}(\Omega)\epsilon_{\rho}\right)N].
\end{equation}

Let $p'_0$ be the solution to the equation $e^{-D\left[p'_0\|\lambda-\tau\Delta_{\max}(\Omega)\right]N}=e^{-D\left[p'_0\|\lambda-\Delta_{\min}(\Omega)\right]N}$. One can solve it to obtain 
\begin{equation}\label{eq:pprime0}
p'_0=\frac{\ln{\frac{1-\lambda+\Delta_{\max}(\Omega)\tau\epsilon}{1-\lambda+\Delta_{\min}(\Omega)\epsilon}}}{\ln{\frac{1-\lambda+\Delta_{\max}(\Omega)\tau\epsilon}{1-\lambda+\Delta_{\min}(\Omega)\epsilon}}+\ln{\frac{\lambda-\Delta_{\min}(\Omega)\epsilon}{\lambda-\Delta_{\max}(\Omega)\tau\epsilon}}}.
\end{equation}
If the device is ``good", there should be $\epsilon_{\rho}\leq \tau\epsilon$. By adopting the Chernoff–Hoeffding theorem, we can derive
\begin{equation}
\begin{aligned}
\mathrm{Pr}\left(N_{\text{pass}} \leq p'_0N\right)\leq e^{-D\left[p'_0\|\lambda-\tau\Delta_{\max}(\Omega)\epsilon\right]N}.
\end{aligned}
\end{equation}
based on
\begin{equation}\label{eq:geqprime}
\frac{\mathbb{E}[N_{\text{pass}}]}{N}\geq \lambda-\Delta_{\max}(\Omega)\epsilon_{\rho}\geq \lambda-\tau\Delta_{\max}(\Omega)\epsilon\geq p'_0.
\end{equation}
One can check the final inequality in Eq.~\eqref{eq:geqprime} given $p'_0$ in Eq.~\eqref{eq:pprime0} when $r>1$.

If the device is ``bad", there should be $\epsilon_{\rho}\geq \epsilon$. We can similarly derive
\begin{equation}
\begin{aligned}
\mathrm{Pr}\left(N_{\text{pass}} \geq p'_0N\right)\leq e^{-D\left[p'_0\|\lambda-\Delta_{\min}(\Omega)\epsilon\right]N}.
\end{aligned}
\end{equation}
based on
\begin{equation}
\frac{\mathbb{E}[N_{\text{pass}}]}{N}\leq \lambda-\Delta_{\min}(\Omega)\epsilon_{\rho}\leq \lambda-\Delta_{\min}(\Omega)\epsilon\leq p'_0.
\end{equation}
The final inequality can also be derived with $p'_0$ in Eq.~\eqref{eq:pprime0} when $r>1$.

Finally, the definition of $\delta$ gives $\delta=e^{-D\left[p'_0\|\lambda-\tau\Delta_{\max}(\Omega)\right]N}=e^{-D\left[p'_0\|\lambda-\Delta_{\min}(\Omega)\right]N}$. Therefore, 
\begin{equation}
N=\frac{\ln{1/\delta}}{D\left[p'_0\|\lambda-\Delta_{\min}(\Omega)\epsilon\right]}
\end{equation}
When $\epsilon$ is sufficiently small, we can simplify it to the second order in $\epsilon$: 
\begin{equation}
\begin{aligned}
N\sim \frac{8\lambda(1-\lambda)\ln{1/\delta}}{\left(\Delta_{\min}(\Omega)-\Delta_{\max}(\Omega)\tau\right)^2 \epsilon^2}.
\end{aligned}
\end{equation}

\section{Details of direct fidelity estimation and proof of Theorem \ref{theo:QSV+DFE}}\label{app:prooftheo2}

To construct an unbiased estimator $Y$ for $\frac{1}{2^k}\sum_{\bar{P}}\Tr{\bar{P}\ketbra{\psi}{\psi}}\cdot\Tr{\bar{P}\rho}$, we start by establishing that the quantity $p(\bar{P}):=\frac{1}{2^k} \left(\Tr{\bar{P} \ketbra{\psi}{\psi}}\right)^2$ satisfies the condition $\sum_{\bar{P}} p(\bar{P})=1$. Consequently, $p(\bar{P})$ can be treated as a probability distribution. Also, define $l:=\frac{8}{\epsilon_2^2 \delta_2}$. The first step is to sample $l$ logical Pauli operators $\left\{\bar{P}_i\right\}_{i=1}^l$ according to the probability distribution $p(\bar{P})$. For each sampled Pauli operator $\bar{P}_i$, define $m_i := \frac{\delta_2 \log(4/\delta_2)}{2^k p(\bar{P})}$. We then measure the corresponding operator $\bar{P}_i$ on the input states $m_i$ times, obtaining measurement outcomes, denoted as $A_{ij}$. The estimator $Y$ is a post-processing on the measurement outcomes, 
\begin{equation}
Y=\frac{1}{l}\sum_i\left(\frac{1}{\Tr{\bar{P} \ketbra{\psi}{\psi}}}\cdot\frac{1}{2^{k/2} m_i}\sum_j A_{ij}\right).
\end{equation}
Note that $\mathbb{E}_j A_{ij}=2^{k/2} \Tr{\bar{P}\rho}$. One can check that $Y$ is unbiased since
\begin{equation}
\begin{aligned}
\mathbb{E}_{i,j}[Y]&=\sum_{\bar{P}} p(\bar{P})\left(\frac{1}{\Tr{\bar{P} \ketbra{\psi}{\psi}}}\cdot\frac{1}{2^{k/2} m_i}\cdot m_i\cdot2^{k/2} \Tr{\bar{P}\rho}\right)\\
&=\frac{1}{2^k}\sum_{\bar{P}}\Tr{\bar{P}\ketbra{\psi}{\psi}}\cdot\Tr{\bar{P}\rho}.
\end{aligned}
\end{equation}

Since $(\id-\mathcal{P}_{\mathcal{V}})\rho(\id-\mathcal{P}_{\mathcal{V}})=\mathcal{P}_{\mathcal{V}_{\perp}}\rho\mathcal{P}_{\mathcal{V}_{\perp}}$ is non-negative, it could be composed to $(\id-\mathcal{P}_{\mathcal{V}})\rho(\id-\mathcal{P}_{\mathcal{V}})=\sum_{j}\lambda_j\ketbra{\phi_j}{\phi_j}$ with $\sum_{j}\lambda_j=1-\Tr{\mathcal{P}_{\mathcal{V}}\rho}$. Thus, if $\Tr{\mathcal{P}_{\mathcal{V}}\rho}\geq1-\epsilon_1$, 
\begin{equation}\label{eq:BCtoD}
\begin{aligned}
&\left|\Tr{\ketbra{\psi}{\psi}\rho}-\frac{1}{2^k}\sum_{\bar{P}}\Tr{\bar{P}\ketbra{\psi}{\psi}}\cdot\Tr{\bar{P}\rho}\right|\\
=&\left|\frac{1}{2^k}\Tr{\sum_{\bar{P}}\Tr{\bar{P}\ketbra{\psi}{\psi}}\bar{P}\cdot\left(\id-\mathcal{P}_{\mathcal{V}}\right)\rho\left(\id-\mathcal{P}_{\mathcal{V}}\right)}\right|\\
=&\left|\frac{1}{2^k}\sum_{\bar{P}}\Tr{\bar{P}\ketbra{\psi}{\psi}}\Tr{\bar{P}\sum_{j}\lambda_j\ketbra{\phi_j}{\phi_j}}\right|\\
\leq&\frac{1}{2^k}\sum_j\lambda_j\left|\sum_{\bar{P}}\Tr{\bar{P}\ketbra{\psi}{\psi}}\Tr{\bar{P}\ketbra{\phi_j}{\phi_j}}\right|\\
\leq&\frac{1}{2^k}\sum_j\lambda_j\left(\sum_{\bar{P}}\left(\Tr{\bar{P}\ketbra{\psi}{\psi}}\right)^2\cdot\sum_{\bar{P}}\left(\Tr{\bar{P}\ketbra{\phi_j}{\phi_j}}\right)^2\right)^{1/2}\\
\leq&\frac{1}{2^k}\epsilon_1\cdot\left(2^k\cdot 2^k\right)^{1/2}=\epsilon_1.
\end{aligned}
\end{equation}
The second line is based on the decomposition 
\begin{equation}
\ketbra{\psi}{\psi}=\mathcal{P}_{\mathcal{V}}\frac{1}{2^k}\cdot\sum_{\bar{P}}\Tr{\bar{P}\ketbra{\psi}{\psi}}\bar{P}
\end{equation}
and the fact that $\left(\id-\mathcal{P}_{\mathcal{V}}\right)^2=\id-\mathcal{P}_{\mathcal{V}}$. The fifth line is directly applying Cauchy–Schwarz inequality. The final line can be derived from the following fact.

\begin{fact}
Suppose $\bar{P}$ are logical Pauli operators of a $[[n,k,d]]$ QEC code, it can be transformed to Pauli operators of the first $k$ qubits with a decoding unitary $U$. To be specific, there exists a unitary $U$ such that $U\bar{P}U^{\dagger}=P_k\otimes \id_{n-k}$ where $P_k$ is a $k$-qubit Pauli operator. Therefore, for any $n$-qubit pure state $\ket{\phi}$, 
\begin{equation}
\begin{aligned}
\sum_{\bar{P}}\left(\Tr{\bar{P}\ketbra{\phi}{\phi}}\right)^2&=\sum_{\bar{P}}\left(\Tr{U\bar{P}U^{\dagger}U\ketbra{\phi}{\phi}U^{\dagger}}\right)^2\\
&=\sum_{P_k}\left(\Tr{P_k\otimes\id\cdot U\ketbra{\phi}{\phi}U^{\dagger}}\right)^2\\
&=\sum_{P_k}\left(\Tr{P_k\rho_k}\right)^2\\
&=2^k\Tr{\rho_{k}^2}\leq 2^k
\end{aligned}
\end{equation}
where $\rho_k=\text{Tr}_{n-k}\left(U\ketbra{\phi}{\phi}U^{\dagger}\right)$.
\end{fact}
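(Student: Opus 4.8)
The plan is to prove the two assertions in sequence: first the \emph{existence} of a decoding unitary $U$ that simultaneously brings every logical Pauli operator into the standard form $P_k\otimes\id_{n-k}$, and then the resulting chain of identities ending in the purity bound. The existence statement is the conceptual core; the identity chain is routine once $U$ is in hand. First I would fix canonical Hermitian representatives of the logical generators $\bar{X}_1,\dots,\bar{X}_k,\bar{Z}_1,\dots,\bar{Z}_k$, which together with the $m=n-k$ stabilizer generators $S_1,\dots,S_m$ form an independent generating set of an $n$-qubit Pauli subgroup satisfying the canonical relations of $k$ logical-qubit operators plus $m$ mutually commuting stabilizers. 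By the symplectic Gram–Schmidt procedure (equivalently, the realization of $Sp(2n,\mathbb{F}_2)$ by the Clifford group), there exists a Clifford unitary $U$ sending this set to the trivial standard form $U S_i U^\dagger=Z_{k+i}$, $U\bar{X}_j U^\dagger=X_j$, $U\bar{Z}_j U^\dagger=Z_j$. Since an arbitrary logical Pauli $\bar{P}$ is, up to phase, a product of the $\bar{X}_j,\bar{Z}_j$, its image is the corresponding product of $X_j,Z_j$, i.e.\ a Pauli on the first $k$ qubits, $P_k\otimes\id_{n-k}$. The induced map $\bar{P}\mapsto P_k$ is then a bijection from the $4^k$ chosen logical representatives onto the $4^k$ $k$-qubit Paulis, which is precisely what licenses the replacement of $\sum_{\bar{P}}$ by $\sum_{P_k}$.

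Given $U$, the remaining steps are mechanical. Using cyclicity and unitary invariance of the trace, I rewrite each term as $\Tr{\bar{P}\ketbra{\phi}{\phi}}=\Tr{(P_k\otimes\id_{n-k})\,\sigma}$ with $\sigma:=U\ketbra{\phi}{\phi}U^\dagger$; tracing out the last $n-k$ qubits gives $\Tr{(P_k\otimes\id_{n-k})\sigma}=\Tr{P_k\rho_k}$ for the reduced state $\rho_k=\tr_{n-k}(\sigma)$. Summing the squares over the bijection and invoking the Hilbert–Schmidt (Parseval) identity for the $k$-qubit Pauli basis, $\Tr{\rho_k^2}=2^{-k}\sum_{P_k}(\Tr{P_k\rho_k})^2$, yields $\sum_{\bar{P}}(\Tr{\bar{P}\ketbra{\phi}{\phi}})^2=2^k\Tr{\rho_k^2}$. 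Finally, $\rho_k$ is a normalized density operator, so its purity obeys $\Tr{\rho_k^2}\le 1$, which gives the claimed bound $\le 2^k$.

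I expect the main obstacle to be the existence claim rather than the computation: one must argue that a \emph{single} Clifford $U$ can normalize all stabilizer and logical generators at once, and that the induced correspondence on the full $4^k$-element logical Pauli set is a genuine bijection onto the $k$-qubit Paulis, with phases chosen so the Hermitian representatives match up term by term. This is exactly the content of the symplectic normal-form theorem for stabilizer codes; once that is cited, the partial-trace reduction, the Parseval identity, and the purity inequality are all standard and introduce no further subtlety.
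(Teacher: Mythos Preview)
Your proposal is correct and follows exactly the paper's route: the displayed chain in the Fact \emph{is} the paper's proof, relying on unitary invariance of the trace, the substitution $U\bar P U^\dagger=P_k\otimes\id$, the partial-trace reduction to $\rho_k$, and the Parseval identity $\sum_{P_k}(\Tr{P_k\rho_k})^2=2^k\Tr{\rho_k^2}$ followed by the purity bound. The only difference is that you supply a justification (symplectic Gram--Schmidt / Clifford normal form) for the existence of the decoding unitary $U$, whereas the paper simply asserts it as part of the statement of the Fact.
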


For simplicity, we denote $N_{\text{pass}}\geq p_0N$ as event $A$, $\Tr{\mathcal{P}_{\mathcal{V}}\rho}\leq 1-\epsilon_1$ as event $B$, $\frac{1}{2^k}\sum_{\bar{P}}\Tr{\bar{P}\ketbra{\psi}{\psi}}\cdot\Tr{\bar{P}\rho}\leq Y-\epsilon_2$ as event $C$, and $\Tr{\ketbra{\psi}{\psi}\rho}\leq Y-\epsilon_1-\epsilon_2$ as event $D$. Eq.~\eqref{eq:BCtoD} shows that if both $B$ and $C$ are false, $D$ should also be false, which implies $\mathrm{Pr}(D)\leq \mathrm{Pr}(B\vee C)$. Therefore, we can conclude that $\mathrm{Pr}(A\land D)\leq\mathrm{Pr}(A\land(B\vee C))\leq \mathrm{Pr}(A\land B)+\mathrm{Pr}(A\land C)\leq \mathrm{Pr}(A|B)+\mathrm{Pr}(C)\leq \delta_1+\delta_2$, i.e., 
\begin{equation}\label{eq:finalfidel}
\mathrm{Pr}\left(N_{\text{pass}}\geq p_0N, \Tr{\ketbra{\psi}{\psi}\rho}\leq Y-\epsilon_1-\epsilon_2\right)\leq\delta_1+\delta_2.
\end{equation}
Suppose $\delta_1=\delta_2=\delta$ and $\epsilon_1=\epsilon_2=\epsilon$. Eq.~\eqref{eq:finalfidel} shows that if the quantum states $\rho$ pass the subspace verification, we can conclude that $\Tr{\ketbra{\psi}{\psi}\rho}\leq Y-2\epsilon$ with confidence $1-2\delta$. The sample complexity $N=N_1+N_2=O\left(\frac{\ln{1/\delta}}{\Delta_{\min}(\Omega)\epsilon}\right)+O\left(\frac{1}{\epsilon^2\delta}+\frac{2^k\ln{1/\delta}}{\epsilon^2}\right)$. 

\end{document}